\newcommand{\F}{\mathbb{F}}
\newtheorem{theorem}{Theorem}
\newtheorem{lemma}[theorem]{Lemma}
\newtheorem{proposition}{Proposition}
\newtheorem{corollary}[theorem]{Corollary}
\newtheorem{conjecture}{Conjecture}
\newtheorem{remark}{Remark}
\newenvironment{proof}{\noindent {\em Proof.}}{\hspace*{\fill} $\Box $\newline}
\begin{document}

\title{Optimal Binary LCD Codes}

\author{
Stefka Bouyuklieva\\
Faculty of Mathematics and Informatics,\\
              St. Cyril and St. Methodius University,\\
Veliko Tarnovo, Bulgaria\\
stefka@ts.uni-vt.bg
}

\date{}

\footnotetext[1]{This research was
supported in part by by a Bulgarian NSF contract KP-06-N32/2-2019.}

\maketitle


\begin{abstract}
Linear complementary dual codes (shortly LCD codes) are codes whose intersections with their dual codes are trivial. These codes were first introduced by Massey in 1992. Nowadays, LCD codes are extensively studied in the literature and widely applied in data storage, cryptography, etc. In this paper, we prove some properties of binary LCD codes using their shortened and punctured codes. We also present some inequalities for the largest minimum weight $d_{LCD}(n,k)$ of binary LCD $[n,k]$ codes for given length $n$ and dimension $k$. Furthermore, we give two tables with the values of $d_{LCD}(n,k)$ for $k\le 32$ and $n\le 40$, and two tables with classification results.
\end{abstract}

\vspace{3mm}
{\bf Keywords:} Optimal binary linear codes, LCD codes.

\section{Introduction}
\label{intro}

A code $C$ is a linear complementary dual (LCD) code if $C\cap C^\perp=\{0\}$.
LCD codes were introduced by Massey \cite{Massey} and gave an optimum linear
coding solution for the two user binary adder channel. The classification of LCD $[n, k]$ codes and determination of the largest minimum weight among all LCD $[n, k]$ codes are fundamental problems. Recently, much work has been done concerning this
fundamental problem (see \cite{Carlet_Pellikaan,Dougherty-Kim-Sole-LCD,Kim-LCD,Harada-LCD}). It has been
shown in \cite{Carlet_Pellikaan} that any code over $\mathbb{F}_q$ is equivalent to some LCD code for $q \ge 4$.  The hull of a linear code is defined to be its intersection with its dual. A code $C$ coincides with its hull if and only if $C$ is self-orthogonal. The LCD codes have the opposite property with respect to the hulls - a linear code is LCD if and only if its hull consists only of the zero vector.

A linear $[n,k]$ code is {\em optimal} if it has the largest
minimum distance among all linear $[n,k]$ codes
(see \cite{Grassl-table} for known bounds on the minimum distance of linear codes).
For some values of
$n$ and $k$ no optimal linear $[n,k]$ code is LCD. We denote by $d_{LCD}(n, k)$ the maximum value of $d$ for which an LCD $[n, k, d]$ code exists. We call the LCD codes with parameters $[n,k,d_{LCD}(n, k)]$ {\em optimal LCD codes}. Binary optimal LCD
codes have been classified up to length 16 in
\cite{Harada-LCD}. In \cite{Araya-Harada}, all values of $d_{LCD}(n, k)$ for $n\le 24$ are obtained and some classification results on LCD codes with such lengths are presented. We extend the tables with values of $d_{LCD}(n, k)$ to length 40. Construction methods that allow to obtain effective LCD binary codes with large minimum
distance and large rate have been proposed in \cite{Carlet_Guilley}.

The paper is organized as follows. Section \ref{sect:preliminaries} contains some basic facts and definitions about linear codes over a finite field. In Section \ref{sect:properties} we give some necessary theorems and prove some properties of binary LCD codes connected with their shortened and punctured codes. Moreover, we obtain as corollaries inequalities for  $d_{LCD}(n, k)$. We apply these inequalities to obtain values of $d_{LCD}(n, k)$ for given $n$ and $k$ and put the results in two tables. Table \ref{table1} contains values of $d_{LCD}(n, k)$ for $n=16,\ldots,40$ and $k=5,\ldots,17$, and Table \ref{table2} - for $n=23,\ldots,40$ and $k=18,\ldots,32$. If the exact value of $d_{LCD}(n,k)$ is not obtained, we give an interval, for example 8--10 means that $8\le d_{LCD}(n, k)\le 10$. The results on $d_{LCD}(n, k)$ are included in Section \ref{sect:dLCD}. We present some classification results in Section \ref{sect:classification}. To classify the codes, we use the program \textsc{Generation} \cite{Generation} with a modification for LCD codes. We complete the paper with a brief conclusion.

\section{Preliminaries}
\label{sect:preliminaries}

Let $\F_q$ be a finite field with $q$ elements and $\F_q^n$ be the $n$-dimensional vector space over $\F_q$. The (Hamming) \textit{distance} $d(x,y)$ between two vectors $x,y\in\F_q^n$ is the number of coordinate positions in which they differ. The (Hamming) \textit{weight} wt$(x)$ of a vector $x\in\F_q^n$ is the number of its nonzero coordinates.
  A linear $[n,k,d]$ code $C$ is a $k$-dimensional subspace of the
vector space $\mathbb{F}_q^n$, and $d$ is the
smallest weight among all non-zero codewords of $C$, called the \textit{minimum weight} (or minimum distance) of the code. A matrix whose rows form a basis of $C$
is called a generator matrix of this code. The weight enumerator
$W(y)$ of a code $C$ is given by $W(y)=\sum_{i=0}^n A_iy^i$ where
$A_i$ is the number of codewords of weight $i$ in $C$. A code is called {\em even-like} (or just even) if the weights of all codewords are
even and {\em odd-like} otherwise.

Two binary codes $C$ and
$C'$ are equivalent if one can be obtained from the other by
permuting the coordinates. The automorphism group of $C$ is the
set of permutations of the coordinates which preserve the code.

The dual code $C^\perp$ of a binary linear code $C$ is defined as $C^\perp=\{x \in
\F_2^n : x \cdot y=0 \mbox{ for all } y\in C\}$ where $x \cdot
y$ denotes the standard inner product of $x$ and $y$. If a linear code $C$ is equivalent to its orthogonal complement $C^\perp$, then it is termed \textit{isodual}, and if $C=C^\perp$, $C$ is a \textit{self-dual} code. A linear
code $C$ is formally self-dual if $C$ and its dual code
$C^{\perp}$ have the same weight enumerator. Any isodual code is also formally self-dual, but there are formally self-dual codes that are not isodual. It is interesting to note that there are LCD codes that are isodual. For example, the only binary $[8,4,3]$ LCD code is equivalent to its dual code.

Different methods have been used to study, construct and classify LCD codes with different parameters over different finite fields, for example $k$-covers of $m$-sets \cite{Harada-LCD}, principal submatrices \cite{Kim-LCD}, etc.
The following characterization is due to Massey \cite{Massey}.

\begin{theorem}\label{thm:Massey}
 Let $G$ and $H$ be a generator matrix and
a parity-check matrix of a code $C$, respectively. Then the following properties are
equivalent:

(i) $C$ is LCD,

(ii) $C^\perp$ is LCD,

(iii) $GG^T$ is nonsingular,

(iv) $HH^T$ is nonsingular.

\end{theorem}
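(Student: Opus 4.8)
The plan is to prove the chain (iii) $\Leftrightarrow$ (i) $\Leftrightarrow$ (ii) $\Leftrightarrow$ (iv), using two standard facts: a parity-check matrix $H$ of $C$ is a generator matrix of $C^\perp$, and $(C^\perp)^\perp=C$. The equivalence (i) $\Leftrightarrow$ (ii) is then essentially a tautology, since the hull $C\cap C^\perp$ is literally the same subspace as $C^\perp\cap (C^\perp)^\perp$; hence one is $\{0\}$ if and only if the other is, i.e. $C$ is LCD iff $C^\perp$ is LCD. So the real content is the equivalence of (i) with (iii).

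For (i) $\Leftrightarrow$ (iii), I would use the parametrization of codewords by the generator matrix. Since $G$ has full row rank $k$, the map $x\mapsto xG$ from $\F_q^k$ to $C$ is a bijection. A codeword $v=xG$ lies in $C^\perp$ exactly when it is orthogonal to every row of $G$, i.e. when $Gv^T=0$, which rewrites as $GG^Tx^T=0$. Therefore
\[
 C\cap C^\perp=\{\, xG : x\in\F_q^k,\ GG^Tx^T=0 \,\},
\]
and by injectivity of $x\mapsto xG$ this set is $\{0\}$ precisely when $x=0$ is the only solution of $GG^Tx^T=0$. As $GG^T$ is a square $k\times k$ matrix, this holds if and only if $GG^T$ is nonsingular, which is (iii).

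Finally, (ii) $\Leftrightarrow$ (iv) follows by applying the equivalence (i) $\Leftrightarrow$ (iii) to the code $C^\perp$ in place of $C$: its generator matrix is exactly the parity-check matrix $H$ of $C$, so ``$C^\perp$ is LCD'' is equivalent to ``$HH^T$ is nonsingular.'' Combining the three equivalences gives the theorem. I expect the only step requiring care to be the bookkeeping in the middle paragraph — correctly translating the condition $v\in C^\perp$ into the linear system $GG^Tx^T=0$ and invoking the full-rank property of $G$ — but no genuine obstacle arises, and the argument is valid over an arbitrary finite field, not merely $\F_2$.
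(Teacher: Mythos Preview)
Your argument is correct and is the standard proof of this classical characterization. Note, however, that the paper does not actually give its own proof of this theorem: it is attributed to Massey \cite{Massey} and stated without proof, so there is no paper-side argument to compare against. Your proof is exactly what one would expect and works over any finite field, as you observe.
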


Another important property is given in the following corollary.

\begin{corollary}\label{cor:sum}
The linear code $C$ is LCD if and only if $\F_q^n=C\oplus C^\perp$.
\end{corollary}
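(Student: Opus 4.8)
The plan is to derive this directly from the definition of an LCD code together with the standard dimension identity $\dim C + \dim C^\perp = n$, valid for every linear code $C \subseteq \F_q^n$ (this is where it matters that we work over a field rather than a ring). No appeal to Theorem \ref{thm:Massey} is actually needed, although one could equally route the argument through part (iii) of that theorem; the dimension count is the most transparent path.

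For the forward direction, suppose $C$ is LCD, i.e. $C \cap C^\perp = \{0\}$. Then the sum $C + C^\perp$ is an internal direct sum, and by the Grassmann formula $\dim(C + C^\perp) = \dim C + \dim C^\perp - \dim(C \cap C^\perp) = \dim C + \dim C^\perp = n$. Since $C + C^\perp$ is a subspace of $\F_q^n$ of dimension $n$, it equals $\F_q^n$, whence $\F_q^n = C \oplus C^\perp$.

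For the converse, suppose $\F_q^n = C \oplus C^\perp$. By the very definition of an (internal) direct sum, the two summands meet only in the zero vector, so $C \cap C^\perp = \{0\}$, which is exactly the condition that $C$ be LCD. Combining the two implications gives the stated equivalence.

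I expect no genuine obstacle here; the proof is a two-line argument. The only points warranting a word of care are: (a) recalling explicitly that $\dim C + \dim C^\perp = n$ holds unconditionally (including for self-orthogonal codes), so that triviality of the intersection forces the sum to be all of $\F_q^n$; and (b) being precise that the notation $C \oplus C^\perp$ denotes the internal direct sum inside $\F_q^n$, i.e. it simultaneously asserts $C + C^\perp = \F_q^n$ and $C \cap C^\perp = \{0\}$, so that the converse implication is immediate.
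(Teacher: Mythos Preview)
Your proof is correct and is the standard two-line argument. The paper itself states this corollary without proof, treating it as an immediate consequence of the definition (and of the dimension identity $\dim C+\dim C^\perp=n$), so there is nothing further to compare.
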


We prove some properties of LCD codes using their shortened and punctured codes on one coordinate position.
Let $C$ be a linear code of length $n$ and $i\in\{1,2,\ldots,n\}$. Denote by $C^i$ the punctured code, and by $C_i$ the shortened code of $C$ on the $i$-th coordinate. We need the following result on the punctured and shortened codes of $C$ that is a modification of \cite[Theorem 1.5.7]{HP}.

\begin{lemma}
\label{shortened-punctured} Let $C$ be an $[n,k,d]$ code and $i\in\{1,2,\ldots,n\}$. Then:

(i) $(C^\perp)_i=(C^i)^\perp$ and $(C^\perp)^i=(C_i)^\perp$;

(ii) if $d>1$, then $C^i$ and $(C^\perp)_i$ have dimensions
$k$ and $n-1-k$, respectively.
\end{lemma}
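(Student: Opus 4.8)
\emph{Proof proposal.} The plan is to prove part (i) directly by double inclusion, working coordinate by coordinate, and then to deduce part (ii) from the injectivity of the puncturing map together with (i). Throughout, it is convenient to index the coordinates of the length-$(n-1)$ codes $C^i$, $C_i$, $(C^\perp)^i$, $(C^\perp)_i$ by the set $\{1,\dots,n\}\setminus\{i\}$, so that puncturing is literally ``forget the $i$-th coordinate'' and shortening is ``forget the $i$-th coordinate, but only for codewords that vanish there''.

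For the first identity in (i), I would take $x\in(C^\perp)_i$, so that $x$ is the restriction to $\{1,\dots,n\}\setminus\{i\}$ of some $\tilde x\in C^\perp$ with $\tilde x_i=0$. For any $v=c^i\in C^i$ with $c\in C$ we then have $x\cdot v=\sum_{j\ne i}x_jc_j=\tilde x\cdot c=0$, because $\tilde x_i=0$; hence $x\in(C^i)^\perp$. Conversely, given $y\in(C^i)^\perp$, let $\tilde y\in\F_q^n$ be defined by $\tilde y_j=y_j$ for $j\ne i$ and $\tilde y_i=0$. For every $c\in C$ we get $\tilde y\cdot c=\sum_{j\ne i}y_jc_j=y\cdot c^i=0$, so $\tilde y\in C^\perp$; since $\tilde y_i=0$, its restriction $y$ lies in $(C^\perp)_i$. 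This yields $(C^\perp)_i=(C^i)^\perp$. The second identity is then formal: applying the first one to $C^\perp$ in place of $C$ and using $(C^\perp)^\perp=C$ gives $C_i=\big((C^\perp)^i\big)^\perp$, and dualizing this equality yields $(C_i)^\perp=(C^\perp)^i$.

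For (ii), I would observe that puncturing is a surjective linear map $C\to C^i$, $c\mapsto c^i$, whose kernel consists exactly of the codewords supported in $\{i\}$; any such codeword has weight at most $1$, so the hypothesis $d>1$ forces the kernel to be $\{0\}$ and hence $\dim C^i=k$. Combining this with (i), $\dim(C^\perp)_i=\dim(C^i)^\perp=(n-1)-\dim C^i=n-1-k$.

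I do not foresee a genuine obstacle; the only thing requiring care is the bookkeeping of ambient spaces and checking that ``extend by a zero at position $i$'' and ``restrict away position $i$'' are mutually inverse on the subspace of vectors that vanish at position $i$. It is worth noting that the set equalities in (i) hold for every $d$: the hypothesis $d>1$ is used only in (ii), and only to guarantee injectivity of puncturing, which is precisely the classical point behind \cite[Theorem~1.5.7]{HP}.
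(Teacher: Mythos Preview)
Your argument is correct. The paper does not actually supply a proof of this lemma; it states it as ``a modification of \cite[Theorem~1.5.7]{HP}'' and moves on. What you have written is precisely the standard textbook argument behind that reference: the double-inclusion for $(C^\perp)_i=(C^i)^\perp$ via extending by zero and restricting, the formal deduction of the companion identity by duality, and the rank--nullity computation for puncturing when $d>1$. There is nothing to compare against in the paper itself, and your proof would serve as a clean, self-contained justification of the cited fact.
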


\section{Some Properties of the LCD Codes}
\label{sect:properties}

From now on, all codes are assumed to be binary.
Obviously, if $C$ is an LCD code with dual distance 1 and $C=(O\vert C')$ then $C'$ is also LCD. Conversely, if $C'$ is LCD then $C=(O\vert C')$ is also LCD (by $O$ we denote the zero code as well as the zero matrix of the suitable size). Consider an LCD code $C$ with dual distance $d^\perp\ge 2$. Let $C_1$ and $C^1$ be the shortened and the punctured codes of $C$ on the first coordinate, respectively. Then $C^1=C_1\cup (x+C_1)$, where $x\in C^1\setminus C_1$. Moreover, $(1,x)\in C$ and $C=(0\vert C_1)\cup (1\vert x+C_1)$.

\begin{lemma}\label{lemma:p-s}
If $C$ is an LCD code with $d\ge 2$ and $d^\perp\ge 2$, then exactly one of the codes $C_1$ and $C^1$ is also LCD.
\end{lemma}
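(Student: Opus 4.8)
The plan is to represent $C$, $C^{1}$ and $C_{1}$ by compatible generator matrices and then to read off the LCD property of each of them from a single rank-one relation between their Gram matrices, via Theorem~\ref{thm:Massey}. Since $d\ge 2$ and $d^{\perp}\ge 2$, Lemma~\ref{shortened-punctured}(ii), applied to $C$ and to $C^{\perp}$, gives $\dim C^{1}=k$ and $\dim C_{1}=k-1$; moreover, as remarked just before the statement, $(1,x)\in C$ for a suitable vector $x$ of length $n-1$, and $C=(0\,|\,C_{1})\cup(1\,|\,x+C_{1})$, $C^{1}=C_{1}\cup(x+C_{1})$. Fixing a generator matrix $G_{1}$ of $C_{1}$, I would use
\[
G=\begin{pmatrix}1 & x\\ 0 & G_{1}\end{pmatrix}\quad\text{for }C,\qquad \widehat G=\begin{pmatrix}x\\ G_{1}\end{pmatrix}\quad\text{for }C^{1};
\]
the dimension count ensures that both matrices have full row rank.

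Next I would compute the three Gram matrices. With $b:=xx^{T}\in\F_2$ and $v:=xG_{1}^{T}$ one gets
\[
GG^{T}=\begin{pmatrix}1+b & v\\ v^{T} & G_{1}G_{1}^{T}\end{pmatrix},\qquad \widehat G\widehat G^{T}=\begin{pmatrix}b & v\\ v^{T} & G_{1}G_{1}^{T}\end{pmatrix},
\]
while $G_{1}G_{1}^{T}$ is a Gram matrix of $C_{1}$. Hence $GG^{T}$ and $\widehat G\widehat G^{T}$ differ only in the top-left entry, which equals $1$ in one of them and $0$ in the other (according to the parity $b$), and their common $(1,1)$ minor is $\det(G_{1}G_{1}^{T})$.

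The crux is then a rank-one determinant identity over $\F_2$: adding $1$ to the $(1,1)$ entry of a square matrix changes its determinant by the $(1,1)$ minor of that matrix (expand the determinant linearly in the first row). Applying this to $GG^{T}$ and $\widehat G\widehat G^{T}$ yields
\[
\det(GG^{T})+\det(\widehat G\widehat G^{T})+\det(G_{1}G_{1}^{T})=0\quad\text{in }\F_2 .
\]
Since $C$ is LCD, Theorem~\ref{thm:Massey} gives $\det(GG^{T})=1$, so $\det(\widehat G\widehat G^{T})+\det(G_{1}G_{1}^{T})=1$; thus exactly one of the matrices $\widehat G\widehat G^{T}$, $G_{1}G_{1}^{T}$ is nonsingular, that is, exactly one of $C^{1}$, $C_{1}$ is LCD.

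I do not anticipate a genuine obstacle: the identity above is immediate once the Gram matrices are written in the displayed block form. The step deserving the most care is the dimension bookkeeping, which is precisely where both hypotheses enter — $d\ge 2$ keeps the puncturing map injective, so that $\dim C^{1}=k$ and $\widehat G$ has rank $k$, while $d^{\perp}\ge 2$ forces the first coordinate to be actually used, so that $\dim C_{1}=k-1$ and a codeword $(1,x)\in C$ exists.
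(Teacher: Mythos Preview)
Your argument is correct. The block forms of $GG^{T}$ and $\widehat G\widehat G^{T}$ are computed correctly, the rank-one determinant identity over $\F_2$ is valid (cofactor expansion along the first row), and the dimension bookkeeping is exactly what Lemma~\ref{shortened-punctured} provides under the hypotheses $d\ge 2$ and $d^{\perp}\ge 2$.

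Your route is genuinely different from the paper's. The paper does not touch Gram matrices at all: it works directly with the decomposition $\F_2^{n}=C\oplus C^{\perp}$ from Corollary~\ref{cor:sum}, writes the standard basis vector $(10\ldots 0)$ uniquely as $(0,v)+(1,v)$ with one summand in $C$ and the other in $C^{\perp}$, and then a short case analysis (according to which summand lies in $C$) shows that exactly one of $C_1$, $C^{1}$ has nontrivial hull. Your approach packages everything into the single identity $\det(GG^{T})+\det(\widehat G\widehat G^{T})+\det(G_{1}G_{1}^{T})=0$, which is slicker and makes the ``exactly one'' conclusion immediate once $\det(GG^{T})=1$. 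On the other hand, the paper's argument yields slightly more information for free: it tells you \emph{which} of $C_1$, $C^{1}$ is LCD in terms of where the vector $v$ lives, a distinction that is used implicitly in later arguments (e.g.\ tracking odd-like versus even-like behaviour). Your determinant identity does not by itself single out the LCD one without an extra step.
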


\begin{proof} Let $\mathcal{B}_1$ and $\mathcal{B}^1$ be the shortened and the punctured codes of $\mathcal{B}=C^\perp$ on the first coordinate, respectively. Then $\mathcal{B}^1=\mathcal{B}_1\cup (y+\mathcal{B}_1)$, where $y\in \mathcal{B}^1\setminus \mathcal{B}_1$. Moreover, $\mathcal{B}^1=C_1^\perp$ and $\mathcal{B}_1=(C^1)^\perp$.

Since $\F_2^n=C\oplus C^\perp$ then $(10\ldots 0)=(0,v)+(1,v)$, where for the vector $v$ there are two possibilities - $v\in C_1$ ($(0,v)\in C$ and $(1,v)\in C^\perp$) or $v\in\mathcal{B}_1$ ($(0,v)\in C^\perp$ and $(1,v)\in C$).
\begin{itemize}
\item Let $v\in C_1$. Then $(1,v)\in C^\perp$, so $v\in\mathcal{B}^1=C_1^\perp$ and therefore $C_1$ is not an LCD code. Suppose that $C^1$ is not LCD either. Let $w\in C^1\cap (C^1)^\perp=C^1\cap \mathcal{B}_1$. Hence $(0,w)\in C^\perp$ and $(1,w)\in C$. It turns out that $(10\ldots 0)=(1,w)+(0,w)$, $(1,w)\in C$, $(0,w)\in C^\perp$. In this way the vector $(10\ldots 0)$ is represented as a sum of a codeword from $C$ plus a codeword from $C^\perp$ in two ways, which contradicts to Corollary \ref{cor:sum}. Hence in this case $C^1$ is an LCD code but $C_1$ is not.
\item Let $v\in \mathcal{B}_1=(C^1)^\perp$. Then $\mathcal{B}_1$ is not an LCD code. Similarly to the previous case, it follows that $\mathcal{B}^1$ is LCD code. Since $\mathcal{B}_1=(C^1)^\perp$ and $\mathcal{B}^1=C_1^\perp$, in this case $C_1$ is an LCD code but $C^1$ is not.
\end{itemize}
This proves that exactly one of the codes $C_1$ and $C^1$ is an LCD code.
\end{proof}

 To prove some more properties of the binary LCD codes, we need the following theorems.

\begin{theorem}\cite{CarletLCD2019}\label{thm:odd}
Let $C$ be an odd-like binary code with parameters $[n,k]$.
Then $C$ is LCD if and only if there exists a basis $c_1,c_2,\ldots,c_k$ of $C$ such that for any $i, j \in\{ 1,2,\ldots,k\}$, $c_i\cdot c_j$ equals 1 if $i = j$ and equals 0 if $i\neq j$.
\end{theorem}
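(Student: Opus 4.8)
The plan is to split into the two implications, one of which is essentially free. If a basis $c_1,\dots,c_k$ with the stated orthonormality property exists, take $G$ to be the $k\times n$ matrix with rows $c_1,\dots,c_k$; then $GG^T=I_k$ is nonsingular, so $C$ is LCD by Theorem~\ref{thm:Massey}(iii). (The odd-like hypothesis is not used here — in fact the existence of a vector $c_i$ with $c_i\cdot c_i=1$ already forces $C$ to contain an odd-weight codeword.) Thus all the content lies in the converse, and for that I would recast the problem in terms of the symmetric bilinear form $\beta(x,y)=x\cdot y$ on $\F_2^n$ restricted to $C$. By Corollary~\ref{cor:sum} the hypothesis that $C$ is LCD says exactly that $\beta$ is nondegenerate on $C$, and the hypothesis that $C$ is odd-like provides a codeword $c$ with $c\cdot c=1$, i.e. $\beta$ is not alternating on $C$. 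So the theorem reduces to the purely linear-algebraic claim: a nondegenerate, non-alternating symmetric bilinear form on a finite-dimensional $\F_2$-space has an orthonormal basis. Equivalently, for any generator matrix $G$ of $C$ the symmetric invertible matrix $M=GG^T$ (invertible by Theorem~\ref{thm:Massey}) satisfies $PMP^T=I_k$ for some invertible $P$; then $G'=PG$ is another generator matrix of $C$ whose rows form the desired basis.

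I would prove the linear-algebra claim by induction on $k=\dim C$. The case $k=1$ is immediate: nondegeneracy forces the unique nonzero vector to have norm $1$. For $k\ge 2$, pick $v_1\in C$ with $v_1\cdot v_1=1$ and set $W=\{w\in C:\ w\cdot v_1=0\}$. A routine argument (using $v_1\cdot v_1=1$ and $C\cap C^\perp=\{0\}$) shows $C=\langle v_1\rangle\oplus W$, that $\dim W=k-1$, and that $\beta$ is still nondegenerate on $W$. If $\beta$ is also non-alternating on $W$, the induction hypothesis gives an orthonormal basis of $W$ and adjoining $v_1$ finishes this case.

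The remaining case, $\beta$ alternating on $W$, is the one I expect to be the main obstacle, since the naive Gram--Schmidt step breaks down when every vector of $W$ is self-orthogonal. Here $W$ carries a symplectic basis $e_1,f_1,\dots,e_m,f_m$ with $k-1=2m$ (so $m\ge 1$), and the idea is to manufacture unit vectors by mixing in $v_1$: the vectors $u_1=v_1+e_1$ and $u_2=v_1+f_1$ satisfy $u_1\cdot u_1=u_2\cdot u_2=1$ and $u_1\cdot u_2=0$, so $U=\langle u_1,u_2\rangle$ is a nondegenerate plane and $C=U\oplus(U^\perp\cap C)$. A short computation then shows that $\beta$ restricted to $U^\perp\cap C$ is again nondegenerate and non-alternating — for instance $v_1+e_1+f_1$ lies in $U^\perp\cap C$ and has norm $1$ — so the induction hypothesis applies to $U^\perp\cap C$, which has dimension $k-2$. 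Concatenating the orthonormal basis $u_1,u_2$ of $U$ with an orthonormal basis of $U^\perp\cap C$ produces an orthonormal basis of $C$, completing the induction. The only things requiring care throughout are the standard facts that these orthogonal complements are nondegenerate and of the expected dimension over $\F_2$, together with the bookkeeping of the alternating/non-alternating dichotomy at each step; the construction of $u_1,u_2$ in the alternating case is where the real idea sits.
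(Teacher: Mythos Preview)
The paper does not give its own proof of this statement; it is quoted from \cite{CarletLCD2019} and used as a black box. Your argument, however, is a correct self-contained proof. The reverse implication is immediate via Theorem~\ref{thm:Massey}(iii), and for the forward direction you correctly translate ``$C$ is LCD'' and ``$C$ is odd-like'' into ``$\beta$ is nondegenerate on $C$'' and ``$\beta$ is not alternating on $C$'', reducing the problem to the classical fact that a nondegenerate, non-alternating symmetric bilinear form over $\F_2$ admits an orthonormal basis. Your induction is sound: the split $C=\langle v_1\rangle\oplus W$ with $W=v_1^{\perp}\cap C$ is an orthogonal decomposition because $v_1\cdot v_1=1$, nondegeneracy of $\beta|_W$ follows from that on $C$, and the delicate alternating case is handled cleanly by the $u_1=v_1+e_1$, $u_2=v_1+f_1$ construction together with the explicit norm-$1$ witness $v_1+e_1+f_1\in U^{\perp}\cap C$. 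One small remark worth making explicit: the alternating case can only arise when $\dim W=k-1$ is even (a nondegenerate alternating space has even dimension), so in particular $k\ge 3$ there and the appeal to the inductive hypothesis on the $(k-2)$-dimensional space $U^{\perp}\cap C$ is legitimate. This is essentially the standard structure theorem for symmetric forms over $\F_2$, recast in coding-theoretic language.
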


\begin{theorem}\cite{CarletLCD2019}\label{thm:even}
Let $C$ be an even-like binary code with parameters $[n,k]$.
Then $C$ is LCD if and only if $k$ is even and there exists a basis $c_1,c'_1$, $c_2,c'_2,\ldots$, $c_{k/2},c'_{k/2}$ of $C$ such that for any $i, j \in\{ 1,2,\ldots,k/2\}$,
  the following conditions hold

  (i) $c_i\cdot c_i =c'_i\cdot c'_i =0$;

  (ii) $c_i\cdot c'_j =0$ if $i\neq j$;

  (iii) $c_i \cdot c'_i =1$.
\end{theorem}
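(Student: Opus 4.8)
The plan is to reinterpret both the hypothesis and the conclusion in terms of the bilinear form that the standard inner product induces on $C$, and then to invoke the classical normal form for non-degenerate alternating forms over a field. The first observation is that over $\F_2$ one has $\wt(x)\equiv x\cdot x\pmod 2$, so a codeword has even weight exactly when $x\cdot x=0$; hence ``$C$ is even-like'' is equivalent to saying that the symmetric bilinear form $B(x,y)=x\cdot y$, restricted to $C$, is \emph{alternating}, i.e.\ $B(x,x)=0$ for all $x\in C$. (This is where the parity hypothesis is essential: over $\F_2$ a symmetric but non-alternating form can be non-degenerate in odd dimension, whereas an alternating one cannot.) On the other side, the radical $\{x\in C : x\cdot y=0 \text{ for all } y\in C\}$ of $B|_C$ is by definition $C\cap C^\perp$, the hull of $C$; so, equivalently by Corollary~\ref{cor:sum}, or by Theorem~\ref{thm:Massey}(iii) upon noting that $GG^T$ is precisely the Gram matrix of $B|_C$ in the basis given by the rows of $G$, the code $C$ is LCD if and only if $B|_C$ is non-degenerate.

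For the forward implication I would then quote, or prove by a short induction, the structure theorem for symplectic forms: a non-degenerate alternating bilinear form on a $k$-dimensional $\F_2$-space forces $k$ to be even and admits a basis $c_1,c'_1,\ldots,c_{k/2},c'_{k/2}$ with $B(c_i,c_j)=B(c'_i,c'_j)=0$ and $B(c_i,c'_j)=1$ if $i=j$ and $0$ otherwise; since $B(c_i,c_i)=0$ is automatic from the alternating property, such a basis is exactly one satisfying (i)--(iii). The induction step is: pick $c_1\neq 0$ in $C$, use non-degeneracy to produce $c'_1$ with $c_1\cdot c'_1=1$ (then $c_1,c'_1$ are automatically linearly independent, since $c_1\cdot c_1=0$), observe that the Gram matrix of $B$ on $W=\langle c_1,c'_1\rangle$ is $\bigl(\begin{smallmatrix}0&1\\1&0\end{smallmatrix}\bigr)$ and hence invertible, conclude $C=W\oplus W^{\perp}$ (orthogonal complement inside $C$ with respect to $B$) with $B$ still non-degenerate and alternating on $W^{\perp}$, and recurse on $W^{\perp}$, which has dimension $k-2$; the parity of $k$ drops out of the recursion.

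For the converse, a basis satisfying (i)--(iii) yields a generator matrix $G$ whose Gram matrix $GG^T$ is block diagonal with $2\times 2$ blocks $\bigl(\begin{smallmatrix}0&1\\1&0\end{smallmatrix}\bigr)$, hence nonsingular over $\F_2$, so $C$ is LCD by Theorem~\ref{thm:Massey}; incidentally condition (i) already forces every codeword of $C$ to have even weight, so such a basis can only occur for an even-like code, which keeps the statement consistent. I expect the only substantive step to be the symplectic normal form, and within it the point needing care is the splitting $C=W\oplus W^{\perp}$, which rests on the non-degeneracy of $B|_W$; the remainder is routine linear algebra.
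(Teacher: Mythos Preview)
Your argument is correct. You rightly recast the problem as one about the restriction of the standard bilinear form to $C$: the even-like hypothesis makes this form alternating (since $x\cdot x=\wt(x)\bmod 2$ over $\F_2$), the LCD hypothesis makes it non-degenerate (its radical is exactly $C\cap C^\perp$), and the desired basis is then nothing but a symplectic basis, whose existence and the concomitant parity constraint on $k$ follow from the classical normal form for non-degenerate alternating forms. The inductive construction you sketch---pick $c_1\neq 0$, find $c_1'$ with $c_1\cdot c_1'=1$, split off the hyperbolic plane $\langle c_1,c_1'\rangle$, recurse on its orthogonal complement inside $C$---is the standard one and goes through without difficulty over $\F_2$. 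The converse via the block-diagonal Gram matrix and Theorem~\ref{thm:Massey}(iii) is immediate.

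There is nothing to compare against in the paper: Theorem~\ref{thm:even} is quoted from \cite{CarletLCD2019} and is not proved here. Your proof is essentially the one given in that reference, phrased in the language of symplectic geometry rather than matrix manipulations; the content is the same. One small remark: the statement as transcribed in the paper omits the conditions $c_i\cdot c_j=0$ and $c_i'\cdot c_j'=0$ for $i\neq j$, which a symplectic basis of course satisfies and which your construction delivers; without them the converse direction would not be obvious, so it is worth noting that your basis actually has this stronger property.
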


In the following proposition, we prove some connections between odd and even binary LCD codes.

\begin{proposition}\label{prop:n+1} Let $C$ be an LCD $[n,k,d]$ code for an even $k$.

(i) If $C$ is odd-like and $c_1,c_2,\ldots,c_k$ is a basis of $C$ with the property given in Theorem \ref{thm:odd}, then the vectors $(1,c_1),(1,c_2),\ldots,(1,c_k)$ generate an even-like LCD $[n+1,k,d$ or $d+1]$ code.

(ii) If $C$ is even-like and $c_1,c'_1,c_2,c'_2,\ldots,c_{k/2},c'_{k/2}$ is a basis of $C$ with the property given in Theorem \ref{thm:even}, then the vectors $(1,c_1),(1,c'_1)$, $(1,c_2)$, $(1,c'_2),\ldots$, $(1,c_{k/2})$, $(1,c'_{k/2})$ generate an odd-like LCD $[n+1,k,d$ or $d+1]$ code.
\end{proposition}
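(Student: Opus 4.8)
The plan is to handle the two parts symmetrically by using the basis characterizations in Theorems \ref{thm:odd} and \ref{thm:even} directly. For part (i), I would consider the new code $\widetilde C$ of length $n+1$ generated by $(1,c_1),\dots,(1,c_k)$. First I would observe that these vectors are linearly independent (any nontrivial $\F_2$-combination summing to zero would force, on the last $n$ coordinates, a nontrivial combination of $c_1,\dots,c_k$ equal to zero, contradicting that they form a basis), so $\dim \widetilde C = k$. Next I would compute the inner products: $(1,c_i)\cdot(1,c_j) = 1 + c_i\cdot c_j$, which equals $1+1=0$ when $i=j$ and $1+0=1$ when $i\neq j$. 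This is \emph{not} yet the diagonal form required by Theorem \ref{thm:odd}, but since $k$ is even I would instead exhibit the even-like structure of Theorem \ref{thm:even}: the new basis vectors $\tilde c_i := (1,c_i)$ all satisfy $\tilde c_i \cdot \tilde c_i = 0$, so $\widetilde C$ is even-like. To get the required pairing I would perform a change of basis — pair up the $\tilde c_i$ and replace them by suitable sums so that the Gram matrix becomes the block-diagonal $\begin{pmatrix}0&1\\1&0\end{pmatrix}$ form; concretely, grouping the indices into pairs and using combinations like $\tilde c_1, \tilde c_1+\tilde c_2$ on each pair, together with killing cross terms between different pairs by further row operations (this is exactly the standard symplectic diagonalization of a nondegenerate alternating form over $\F_2$). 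Because the Gram matrix $J + I$ (all-ones plus identity, size $k$) is invertible over $\F_2$ for even $k$, such a basis exists, and Theorem \ref{thm:even} then certifies that $\widetilde C$ is LCD.

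For part (ii), the argument is the mirror image. Starting from an even-like LCD code $C$ with symplectic basis $c_1,c_1',\dots,c_{k/2},c_{k/2}'$, I would form $\widetilde C$ generated by $(1,c_i)$ and $(1,c_i')$. Linear independence follows as before. Now $(1,c_i)\cdot(1,c_i) = 1 + c_i\cdot c_i = 1+0 = 1$, and similarly for the $c_i'$, so every basis vector is self-nonorthogonal; moreover $(1,c_i)\cdot(1,c_i') = 1 + 1 = 0$ and $(1,c_i)\cdot(1,c_j') = 1+0 = 1$ for $i\neq j$, while $(1,c_i)\cdot(1,c_j) = 1 + 0 = 1$ for $i\neq j$. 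So again the Gram matrix is $I + J'$ for an appropriate all-ones-type matrix, which is invertible over $\F_2$; by a change of basis (Gram–Schmidt-style orthonormalization, valid precisely because the Gram matrix is nonsingular) I can produce a basis with identity Gram matrix, and Theorem \ref{thm:odd} then shows $\widetilde C$ is LCD. That $\widetilde C$ is odd-like is immediate since $(1,c_1)$ has odd weight (it is $1$ plus the even number $\wt(c_1)$).

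It remains to pin down the minimum distance and the even-like/odd-like assertions. For the weight of a codeword $(1\mbox{ or }0, c)$ in $\widetilde C$: if $c$ ranges over $C$, then any codeword of $\widetilde C$ has the form $(\varepsilon, c)$ where $\varepsilon = 1$ iff an odd number of basis vectors were used. When $\varepsilon = 0$ we get $\wt = \wt(c)$ with $c\in C$ nonzero possible (using an even number of generators), giving weights $\ge d$; when $\varepsilon = 1$ we get $\wt = 1 + \wt(c)$. Since the nonzero codewords of $C$ all have weight $\ge d$, and the zero codeword contributes the weight-$1$ vector only in the $\varepsilon=1$ case, one checks $\widetilde d \in \{d, d+1\}$: it is at least $d$ unless some codeword $(1,c)$ with $\wt(c) = d-1$ exists, and it cannot exceed $d+1$ because any weight-$d$ codeword $c$ of $C$ reachable with an even number of generators already gives a weight-$d$ codeword of $\widetilde C$ — but in the odd-like case of (i) one must instead note that $c$ of weight $d$ might only be reachable with an odd number of generators, yielding $d+1$; hence the stated ``$d$ or $d+1$.'' For the parity type: in (i), every generator $(1,c_i)$ has even weight (odd $+$ odd, since $C$ odd-like does not force all $c_i$ odd — rather one argues from $\tilde c_i\cdot\tilde c_i = 0$ that $\widetilde C$ is even-like, which is the clean route); in (ii) the generator $(1,c_i)$ has odd weight, so $\widetilde C$ is odd-like. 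I expect the main obstacle to be the change-of-basis step: converting the computed Gram matrix (of the form $I+J$ or a $k/2$-block analogue) into the canonical forms demanded by Theorems \ref{thm:odd} and \ref{thm:even}. This is pure linear algebra over $\F_2$ — nonsingularity of $I+J$ for even $k$, and the standard fact that a code with nonsingular Gram matrix can be put in ``orthonormal'' or ``symplectic'' basis form — but it must be executed carefully, and it is tempting (and wrong) to think the naively computed basis already has the required shape.
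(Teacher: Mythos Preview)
Your approach is correct but takes a genuine detour compared with the paper's proof. Once you have computed the Gram matrix of the new generators --- $J_k - I_k$ in case~(i), and the block matrix with $I_2$ on the diagonal and $J_2$ off-diagonal in case~(ii) --- and observed that it is nonsingular over $\F_2$ (which you do: $J_k+I_k$ is invertible for even $k$), you are already done by Theorem~\ref{thm:Massey}: a code is LCD if and only if $GG^T$ is nonsingular. The paper simply computes $\det(\overline{G}\,\overline{G}^T)=1-k\equiv 1\pmod 2$ in~(i) and $\det(\overline{G}\,\overline{G}^T)=(-1)^{k/2}(1-k)\equiv 1\pmod 2$ in~(ii), invokes Theorem~\ref{thm:Massey}, and stops. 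Your plan to perform a symplectic (resp.\ orthonormal) change of basis so as to invoke Theorem~\ref{thm:even} (resp.\ Theorem~\ref{thm:odd}) is valid linear algebra over $\F_2$, but it is an unnecessary extra step; the ``main obstacle'' you anticipate simply does not arise in the direct argument.

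Two smaller points. Your minimum-distance discussion is more tangled than it needs to be: since the $c_i$ are linearly independent, every codeword of $\widetilde C$ has the form $(\varepsilon,c)$ with $c\in C$, and $c=0$ forces the empty combination, hence $(\varepsilon,c)=(0,0)$. So every nonzero codeword has weight $\wt(c)+\varepsilon\ge d$, while any weight-$d$ codeword of $C$ yields one of weight $d$ or $d+1$ in $\widetilde C$. For the parity in~(i), note that $c_i\cdot c_i=1$ already forces $\wt(c_i)$ odd, so each $(1,c_i)$ has even weight and $\widetilde C$ is even-like; there is no need for the hedge ``$C$ odd-like does not force all $c_i$ odd.''
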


\begin{proof} (i) Let $C$ be an odd-like code with a generator matrix $G$ whose rows are the codewords $c_1,c_2,\ldots,c_k$, $c_i\cdot c_j=1$ if $i = j$ and $c_i\cdot c_j=0$ if $i\neq j$. Then $GG^T=I_k$, where $I_k$ is the $k\times k$ identity matrix. If $\overline{G}=(1\vert G)$, then
\[ \overline{G}\overline{G}^T=\left(\begin{array}{cccc}
    0&1&\cdots&1\\
    1&0&\cdots&1\\
    &&\ddots&\\
    1&1&\cdots&0\\
    \end{array}\right)=J_k-I_k,
    \]
    where $J_k$ is the all-ones matrix. Calculating the determinant of this matrix, we have det$(\overline{G}\overline{G}^T)=1-k\equiv 1\pmod{2}$. It turns out that $\overline{G}\overline{G}^T$ is nonsingular and therefore $\overline{C}=\langle \overline{G}\rangle$ is an LCD $[n+1,k,d$ or $d+1]$ code. Since all rows in its generator matrix $\overline{G}$ are vectors of even weights, the code is even.

    (ii) Let $C$ be an even-like code with a generator matrix $G$ whose rows are the codewords $c_1,c'_1,c_2,c'_2,\ldots,c_{k/2},c'_{k/2}$. If $\overline{G}=(1\vert G)$, then
\[ \overline{G}\overline{G}^T=\left(\begin{array}{cccc}
    I_2&J_2&\cdots&J_2\\
    J_2&I_2&\cdots&J_2\\
    &&\ddots&\\
    J_2&J_2&\cdots&I_2\\
    \end{array}\right),
    \]
    where $I_2$ is the identity matrix, and $J_2$ is the all-ones matrix of order 2. The determinant of this matrix is equal to $(-1)^{k/2}(1-k)\equiv 1\pmod 2$. Hence the code, generated by $\overline{G}$ is LCD. The rows of $\overline{G}$ have odd weights and therefore the code is odd-like.
\end{proof}

\begin{remark}\rm A similar statement is given in \cite[Lemma 3.2]{Araya-Harada-Saito}.
\end{remark}

\begin{corollary}\label{cor:k-even-d-odd}
If $k$ is even and $d_{LCD}(n,k)$ is odd, then $d_{LCD}(n+1,k)\ge d_{LCD}(n,k)+1$.
\end{corollary}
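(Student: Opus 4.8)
The plan is to derive Corollary~\ref{cor:k-even-d-odd} directly from Proposition~\ref{prop:n+1}. Fix $n$ and an even $k$, and set $d = d_{LCD}(n,k)$, which we assume to be odd. By definition there exists an LCD $[n,k,d]$ code $C$. The first step is to dispose of a possible annoyance: a priori $C$ might be neither purely odd-like nor purely even-like, so I cannot immediately quote Proposition~\ref{prop:n+1}. However, an even-like binary code has all codewords of even weight, so its minimum weight is even; since $d$ is odd, $C$ cannot be even-like, hence $C$ is odd-like. This is the one small point that needs to be checked, and it is where the hypothesis ``$d_{LCD}(n,k)$ is odd'' really gets used.

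Once we know $C$ is odd-like, apply Theorem~\ref{thm:odd} to obtain a basis $c_1,\dots,c_k$ of $C$ with $c_i\cdot c_j = \delta_{ij}$. Then Proposition~\ref{prop:n+1}(i) tells us that the vectors $(1,c_1),\dots,(1,c_k)$ generate an even-like LCD code $\overline{C}$ of length $n+1$ and dimension $k$, whose minimum weight is $d$ or $d+1$. But $\overline{C}$ is even-like, so its minimum weight is even; since $d$ is odd, the minimum weight cannot be $d$, and therefore it must be $d+1$. Thus $\overline{C}$ is an LCD $[n+1,k,d+1]$ code, which gives $d_{LCD}(n+1,k)\ge d+1 = d_{LCD}(n,k)+1$, as claimed.

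I expect no real obstacle here; the corollary is essentially a parity bookkeeping consequence of the proposition. The only thing to be careful about is making the two parity arguments explicit — first that an odd minimum weight forces $C$ to be odd-like so that Proposition~\ref{prop:n+1}(i) applies, and second that evenness of $\overline{C}$ upgrades the minimum weight from the ambiguous ``$d$ or $d+1$'' to exactly $d+1$. Both follow from the single elementary fact that an even-like binary code has even minimum weight.
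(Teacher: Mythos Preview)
Your proof is correct and is exactly the intended derivation from Proposition~\ref{prop:n+1}: the paper states the corollary immediately after that proposition without further argument, and your two parity observations (odd $d$ forces $C$ odd-like; even-likeness of $\overline{C}$ forces minimum weight $d+1$) are precisely what fills the gap. One cosmetic remark: by the paper's definition a binary code is odd-like simply by virtue of not being even-like, so there is no third possibility to worry about---your concern that $C$ ``might be neither'' is unnecessary, though it does no harm.
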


The following propositions show some properties of the shortened and punctured codes of an LCD code.

\begin{proposition}\label{prop:punctured} If $C$ is an even-like $[n,k,d]$ LCD code, then the punctured code of $C$ on any coordinate is again LCD.
\end{proposition}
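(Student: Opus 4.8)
The plan is to prove this straight from the Gram-matrix criterion of Theorem~\ref{thm:Massey}, bypassing the case analysis of Lemma~\ref{lemma:p-s}. Fix the coordinate to be punctured, say position $\ell$, pick a generator matrix $G$ of $C$ (a $k\times n$ matrix), let $v\in\F_2^k$ be its $\ell$-th column, and let $\overline G$ be the $k\times(n-1)$ matrix obtained by deleting that column. The first, routine point is that $\overline G$ is a \emph{generator} matrix of $C^\ell$: its rows span $C^\ell$ by definition, and a nontrivial linear dependence among them would give a nonzero codeword of $C$ supported only on position $\ell$, i.e.\ a weight-$1$ codeword, which is impossible because $C$ is even-like. (If $k=0$ the proposition is vacuous.)

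Next I would record the identity $\overline G\,\overline G^T = GG^T + vv^T$ over $\F_2$, which is immediate entrywise: deleting column $\ell$ removes exactly the contribution $v_sv_t$ from the $(s,t)$ entry of $GG^T$, and $-1=1$ in $\F_2$. Since $C$ is LCD, $GG^T$ is nonsingular (Theorem~\ref{thm:Massey}), so the whole statement reduces to showing that the rank-one perturbation $GG^T+vv^T$ is still nonsingular.

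The decisive input is that $C$ being even-like forces the quadratic form $x\mapsto x^T(GG^T)x$ on $\F_2^k$ to vanish identically: indeed $x^TGG^Tx=(x^TG)(x^TG)^T=\wt(x^TG)\bmod 2$, and $x^TG$ is a codeword of $C$, hence of even weight. Granting this, suppose $x$ lies in the kernel of $GG^T+vv^T$, so $GG^Tx=(v^Tx)\,v$. If $v^Tx=0$ then $GG^Tx=0$, hence $x=0$ by nonsingularity of $GG^T$. If $v^Tx=1$ then $x=(GG^T)^{-1}v$; writing $u=(GG^T)^{-1}v$ and using symmetry of $GG^T$ we get $v^Tx=v^T(GG^T)^{-1}v=u^TGG^Tu=0$, contradicting $v^Tx=1$. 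So the kernel is trivial, $\overline G\,\overline G^T$ is nonsingular, and $C^\ell$ is LCD by Theorem~\ref{thm:Massey}. (Equivalently one can package the middle step as the matrix-determinant-lemma identity $\det(GG^T+vv^T)=\det(GG^T)\bigl(1+v^T(GG^T)^{-1}v\bigr)$, whose second factor is $1$.)

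There is no real obstacle: the only points needing care are checking that $\overline G$ has full rank (handled by even-likeness excluding weight-$1$ words) and the $\F_2$ sign bookkeeping in $\overline G\,\overline G^T=GG^T+vv^T$; the vanishing of $x^TGG^Tx$, again a consequence of even-likeness, then does all the work. Note that nothing about the dual distance of $C$ enters, so the argument applies uniformly to every coordinate, which is exactly the content of the proposition. If one instead prefers to stay inside the framework of Lemma~\ref{lemma:p-s} when $d^\perp\ge 2$, one checks that the ``other'' alternative there would force a codeword of odd weight in $C$, so the punctured code must be the LCD one; the remaining case $d^\perp=1$ reduces, after moving the zero coordinate, to a shorter even-like LCD code.
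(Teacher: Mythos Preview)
Your proof is correct and takes a genuinely different route from the paper's. The paper invokes a refined basis $c_1,c'_1,\ldots,c_{k/2},c'_{k/2}$ of $C$ (from \cite[Lemma~7]{CarletLCD2019}) satisfying the additional constraint $c_{i,1}=c'_{i,1}$, then writes $G_1G_1^T$ explicitly as a $2\times 2$ block-diagonal matrix with blocks built from $I_2$ and $J_2$, and computes its determinant $(-1)^{k/2-1}(2s-1)\equiv 1\pmod 2$ by hand. Your argument bypasses this special basis entirely: you work with an arbitrary generator matrix $G$, reduce to the rank-one update $\overline G\,\overline G^T=GG^T+vv^T$, and dispatch it with the single observation that even-likeness is \emph{exactly} the vanishing of the quadratic form $x\mapsto x^TGG^Tx$ on $\F_2^k$, which forces $v^T(GG^T)^{-1}v=0$.

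What each approach buys: the paper's computation is concrete and self-contained once one accepts the cited basis lemma, but it obscures why even-likeness is the right hypothesis. Your argument is shorter, basis-free, and isolates the conceptual point---even-likeness $\Leftrightarrow$ the Gram form is alternating---so the matrix-determinant lemma (or the direct kernel computation you give) finishes it in one line. Your alternative closing remark, deducing the $d^\perp\ge 2$ case from Lemma~\ref{lemma:p-s} by noting that the other alternative would place an odd-weight word in $C$, is also a valid and pleasant shortcut.
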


\begin{proof} Let's consider the punctured code $C^1$ of $C$ on the first coordinate.
According to \cite[Lemma 7]{CarletLCD2019},  there exists a basis $c_1,c'_1,c_2,c'_2,\ldots,c_{k/2},c'_{k/2}$ of $C$ such that for any $i, j \in\{ 1,2,\ldots,k/2\}$,
  the following conditions hold: (i) $c_i\cdot c_i =c'_i\cdot c'_i =0$; (ii) $c_i\cdot c'_j =0$ if $i\neq j$; (iii) $c_i \cdot c'_i =1$;
    (iv) $c_{i,1} = c'_{i,1}$, where $c_i = (c_{i,1},\ldots , c_{i,n} )$ and $c'_i = (c'_{i,1},\ldots , c'_{i,n} )$.

  Without loss of generality we can suppose that $c_{i,1} = c'_{i,1}=1$ for $i=1,\ldots,s$, and $c_{i,1} = c'_{i,1}=0$ for $i=s+1,\ldots,k/2$. Denote by $b_i$ and $b'_i$ the vectors $(c_{i,2},\ldots , c_{i,n} )$ and $(c'_{i,2},\ldots , c'_{i,n} )$, respectively. Hence (i) $b_i\cdot b_i =b'_i\cdot b'_i =1$ if $i\le s$ and 0 otherwise, (ii) $b_i\cdot b'_j =1$ if $i\neq j$, $i,j\in\{1,\ldots,s\}$, and 0 otherwise,(iii) $b_i \cdot b'_i =0$ if $i\le s$ and 1 otherwise. It turns out that
  $$ G_1G^T_1=\left(\begin{array}{cc}
  A&O\\
  O&B\\
  \end{array}\right),$$
  where $G_1$ is the matrix with rows $b_1,b'_1,b_2,b'_2,\ldots,b_{k/2},b'_{k/2}$, $A$ and $B$ are $2s\times 2s$ and $(k-2s)\times (k-2s)$ matrices, respectively,
  \[ A=\left(\begin{array}{cccc}
  I_2&J_2&\cdots&J_2\\
  J_2&I_2&\cdots&J_2\\
  \vdots&\vdots&\ddots&\vdots\\
  J_2&J_2&\cdots&I_2
  \end{array}\right), \ \ \ \ B=\left(\begin{array}{cccc}
  J_2-I_2&O&\cdots&O\\
  O&J_2-I_2&\cdots&O\\
  \vdots&\vdots&\ddots&\vdots\\
  O&O&\cdots&J_2-I_2
  \end{array}\right).\]
  The determinant of $G_1G^T_1$ is equal to $(-1)^{k/2-1}(2s-1)\equiv 1\pmod 2$ and hence this matrix is nonsingular over $\F_2$. It follows that the punctured code $C^1$ is an LCD $[n-1,k,d$ or $d-1]$ code.
\end{proof}

\begin{proposition}\label{prop:odd-like}
Let $C$ be an odd-like $[n,k,d]$ LCD code. If $C$ contains the all-ones vector, then the shortened code of $C$ on any coordinate is again LCD. If $C$ does not contain the all-ones vector, then there are integers $i,j\in\{1,\ldots,n\}$ such that the shortened code of $C$ on the $i$-th coordinate and the punctured code of $C$ on the $j$-th coordinate are LCD codes.
\end{proposition}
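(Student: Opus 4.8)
The plan is to handle both cases through the dual code, using only Proposition \ref{prop:punctured}, Lemma \ref{shortened-punctured}(i), Theorem \ref{thm:Massey}, and one auxiliary observation about puncturing odd-like LCD codes. For the first case, suppose $\mathbf{1}\in C$. Since $\mathbf{1}\in C=(C^\perp)^\perp$, the dual $C^\perp$ is even-like, and it is LCD by Theorem \ref{thm:Massey}. Proposition \ref{prop:punctured} then gives that $(C^\perp)^i$ is LCD for every coordinate $i$, and by Lemma \ref{shortened-punctured}(i) we have $(C^\perp)^i=(C_i)^\perp$; hence $(C_i)^\perp$ is LCD and so is $C_i$, for all $i$. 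That is the first assertion.

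For the second case I would first isolate the following claim: \emph{if $B$ is an odd-like LCD code with $\mathbf{1}\notin B$, then $B^j$ is LCD for some coordinate $j$.} To prove it, take a generator matrix $G$ of $B$ whose rows $c_1,\ldots,c_k$ are orthonormal (Theorem \ref{thm:odd}); each $c_s$ then has odd weight. The vector $z=c_1+\cdots+c_k$ lies in $B$, so $z\neq\mathbf{1}$, and therefore $z_j=0$ for some $j$; equivalently, the $j$-th column $u_j$ of $G$ has even weight. For this $j$ one checks that $e_j\notin B$: if $e_j=\sum_{s\in S}c_s$, then $1=\wt(e_j)\equiv|S|\pmod 2$, while the $j$-th column of $G$ would be the indicator vector of $S$ and thus have odd weight, contradicting its even weight. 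Consequently the puncturing map $B\to B^j$ is injective, $G^{(j)}$ (the matrix $G$ with column $j$ deleted) is a generator matrix of $B^j$, and over $\F_2$ one has $G^{(j)}(G^{(j)})^T=I_k+u_ju_j^T$, whose determinant equals $1+u_j^Tu_j=1+\wt(u_j)\equiv 1\pmod 2$. By Theorem \ref{thm:Massey}, $B^j$ is LCD.

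It remains to apply the claim twice. Applied to $B=C$ it yields $j$ with $C^j$ LCD. Since $C$ is odd-like we have $\mathbf{1}\notin C^\perp$, and since $\mathbf{1}\notin C=(C^\perp)^\perp$ the code $C^\perp$ is odd-like as well; moreover $C^\perp$ is LCD. So the claim applied to $B=C^\perp$ gives a coordinate $i$ with $(C^\perp)^i$ LCD, which by Lemma \ref{shortened-punctured}(i) says $(C_i)^\perp$ is LCD, hence $C_i$ is LCD. The main obstacle is the claim itself, and more precisely the realization that the hypothesis $\mathbf{1}\notin C$ is exactly what forces an even-weight column in an orthonormal generator matrix, together with the remark that such a column automatically rules out $e_j\in C$ so that the clean rank-one determinant computation applies; the remaining steps are routine bookkeeping with duality.
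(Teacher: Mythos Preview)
Your proof is correct. The first case is handled exactly as in the paper.

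For the second case your route genuinely differs from the paper's. The paper works entirely inside $C$: starting from the orthonormal basis, it computes two separate Gram matrices---one for the punctured code (block form $\bigl(\begin{smallmatrix}J_s-I_s&O\\O&I_{k-s}\end{smallmatrix}\bigr)$, nonsingular iff the column sum $s$ is even) and one for the shortened code (block form $\bigl(\begin{smallmatrix}J_{s-1}-I_{s-1}&O\\O&I_{k-s}\end{smallmatrix}\bigr)$, nonsingular iff $s$ is odd)---and then observes that linear independence of the rows rules out all column sums being even, while $\mathbf{1}\notin C$ rules out all being odd. You instead isolate a single claim (an even-weight column in the orthonormal generator yields an LCD punctured code, via the rank-one update $\det(I_k+u_ju_j^T)=1+\wt(u_j)$), and then apply it twice: once to $C$ to get the punctured code, and once to $C^\perp$---which you correctly check is odd-like LCD with $\mathbf{1}\notin C^\perp$---using $(C^\perp)^i=(C_i)^\perp$ to convert puncturing of the dual into shortening of $C$. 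Your approach is more economical (one determinant computation and duality, rather than two block computations), while the paper's direct method yields slightly sharper information: it identifies precisely which coordinates give an LCD punctured code and which give an LCD shortened code (even vs.\ odd column sum in the orthonormal generator of $C$).
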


\begin{proof} If $C$ contains the all-ones vector, then its dual code $C^\perp$ is even. According to the previous proposition, the punctured code of $C^\perp$ on any coordinate is again LCD code, hence the shortened code of $C$ on any coordinate is again LCD.

Now suppose that $C$ does not contain the all-ones vector.
According to Theorem \ref{thm:odd}, there is a basis $c_1,\ldots,c_k$ of $C$ such that $c_i\cdot c_i=1$ for $i=1,\ldots,k$, and $c_i\cdot c_j=0$ for $i\neq j$. Suppose that $c_i=(1,b_i)$ for $i=1,\ldots,s$, and $c_i=(0,b_i)$ otherwise. Then $b_i\cdot b_i=0$ for $i=1,\ldots, s$, $b_i\cdot b_i=1$ for $i=s+1,\ldots, k$, $b_i\cdot b_j=1$ for $1\le i<j\le s$, and $b_i\cdot b_j=0$ for $i< j$, $j>s$. It turns out that
$$ G_1G^T_1=\left(\begin{array}{cc}
  J_s-I_s&O\\
  O&I_{k-s}\\
  \end{array}\right),$$
 where $G_1$ is the matrix whose rows are $b_1,\ldots,b_k$. The determinant of $G_1G_1^T$ is equal to $(-1)^{s-1}(s-1)$ and it is nonsingular, if $s$ is even. Let $s_i$ be the number of ones in the $i$-th column of the generator matrix $G$. If $s_i$ is even for some $i$, $1\le i\le n$ then the punctured code of $C$ on this coordinate is an LCD $[n-1,k,d$ or $d-1]$ code.

 If $s$ is odd, we consider the matrix $G_0$ with rows $b_2+b_1,\ldots$, $b_s+b_1$, $b_{s+1},\ldots,b_k$. This matrix generates the shortened code $C_0$. Now we have
 $$ G_0G^T_0=\left(\begin{array}{cc}
  J_{s-1}-I_{s-1}&O\\
  O&I_{k-s}\\
  \end{array}\right),$$
  and so det$G_0G_0^T=(-1)^{s-2}(s-2)$. Hence $G_0G_0^T$ is a nonsingular matrix and the shortened code $C_0$ is LCD. It turns out that  if $s_i$ is odd for some $i$, $1\le i\le n$ then the shortened code of $C$ on this coordinate is an LCD $[n-1,k-1,\ge d]$ code.

  If $s_i$ is even for all $i\in\{1,\ldots,k\}$ then $c_1+\cdots+c_k=0$ which is not the case, as these vectors are linearly independent. Hence there is at least one $i$ such that $s_i$ is odd and the corresponding shortened code is LCD. If $s_i$ is odd for all $i\in\{1,\ldots,k\}$ then $c_1+\cdots+c_k=(11\ldots 1)=\mathbf{1}$ which means that the shortened code of $C$ on the $i$-th coordinate is LCD for  all $i\in\{1,\ldots,k\}$, but none of the punctured codes is LCD.
\end{proof}

The above propositions give the following corollaries.

\begin{corollary}\label{cor:n-k-odd}
If $n-k$ is odd then $d_{LCD}(n,k)= d_{LCD}(n-1,k)$ or $d_{LCD}(n-1,k)+1$.
\end{corollary}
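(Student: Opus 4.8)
The plan is to sandwich $d_{LCD}(n,k)$ between $d_{LCD}(n-1,k)$ and $d_{LCD}(n-1,k)+1$. One inequality is free: if $C'$ is an optimal LCD $[n-1,k,d']$ code, then, as noted at the start of Section~\ref{sect:properties}, $C=(O\,\vert\,C')$ is an LCD $[n,k,d']$ code, so $d_{LCD}(n,k)\ge d'=d_{LCD}(n-1,k)$. (Here one uses that, since $n-k$ is odd and $n\ge k$, in fact $n-k\ge1$, hence $n-1\ge k$ and LCD $[n-1,k]$ codes exist.) So the substantive part is the reverse estimate $d_{LCD}(n,k)\le d_{LCD}(n-1,k)+1$.

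For that, I would take an optimal LCD $[n,k,d]$ code $C$ with $d=d_{LCD}(n,k)$ and produce from it an LCD $[n-1,k,\ge d-1]$ code by puncturing on a suitable coordinate. The key move is to pass to the dual. By Theorem~\ref{thm:Massey}, $C^\perp$ is LCD, and $\dim C^\perp=n-k$ is odd; hence, by the contrapositive of Theorem~\ref{thm:even}, $C^\perp$ cannot be even-like, so $C^\perp$ is odd-like. Now Proposition~\ref{prop:odd-like} applies to $C^\perp$: in both of its cases (whether or not $C^\perp$ contains the all-ones vector) it furnishes a coordinate $i\in\{1,\ldots,n\}$ for which the shortened code $(C^\perp)_i$ is LCD. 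By Lemma~\ref{shortened-punctured}(i), $(C^\perp)_i=(C^i)^\perp$, so $(C^i)^\perp$ is LCD, and therefore $C^i$ is LCD by Theorem~\ref{thm:Massey} once more.

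It then remains to read off the parameters of $C^i$. If $d\ge2$, Lemma~\ref{shortened-punctured}(ii) gives $\dim C^i=k$, and since puncturing drops the minimum weight by at most one, $C^i$ is an LCD $[n-1,k,\ge d-1]$ code; thus $d_{LCD}(n-1,k)\ge d-1=d_{LCD}(n,k)-1$. If $d=1$, the inequality $d_{LCD}(n-1,k)\ge 0=d_{LCD}(n,k)-1$ is trivial. Combining with the lower bound from the first paragraph yields $d_{LCD}(n-1,k)\le d_{LCD}(n,k)\le d_{LCD}(n-1,k)+1$, which is exactly the claim.

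The only delicate point is guaranteeing a coordinate $i$ with $(C^\perp)_i$ LCD, and this is precisely what Proposition~\ref{prop:odd-like} supplies once we know $C^\perp$ is odd-like; everything else is bookkeeping with Lemma~\ref{shortened-punctured} and Theorem~\ref{thm:Massey}. The parity hypothesis that $n-k$ is odd enters only to force $C^\perp$ to be odd-like via Theorem~\ref{thm:even}, which is what makes Proposition~\ref{prop:odd-like} available.
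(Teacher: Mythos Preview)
Your proof is correct and follows essentially the same line as the paper: use the parity of $n-k$ to force $C^\perp$ to be odd-like, then invoke Proposition~\ref{prop:odd-like} to obtain an LCD punctured code of $C$ of length $n-1$. The only cosmetic difference is that the paper deduces $\mathbf{1}\notin C$ from $C^\perp$ being odd-like and then applies Propositions~\ref{prop:punctured} and~\ref{prop:odd-like} to $C$ (splitting on whether $C$ is even- or odd-like), whereas you apply Proposition~\ref{prop:odd-like} directly to $C^\perp$ and dualize via Lemma~\ref{shortened-punctured}, which neatly avoids that case split.
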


\begin{proof}
Since $n-k$ is odd then all LCD $[n,n-k]$ codes are odd-like. Hence no LCD $[n,k]$ code contains the all-ones vector. According to the above propositions, if $C$ is an optimal $[n,k,d]$ LCD code, then at least one of the punctured codes of $C$ on one coordinate position is an $[n-1,k,d$ or $d-1]$ LCD code. Hence $d_{LCD}(n-1,k)\ge d_{LCD}(n,k)-1$.
\end{proof}

\begin{corollary}\label{cor:odd1}
If $k$ is odd then $d_{LCD}(n,k)\le d_{LCD}(n-1,k-1)$.
\end{corollary}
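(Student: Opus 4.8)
The plan is to exploit the fact that when $k$ is odd, every LCD $[n,k]$ code must be odd-like (since an even-like LCD code has even dimension by Theorem \ref{thm:even}), and then to apply the shortening half of Proposition \ref{prop:odd-like}. Concretely, I would start from an optimal LCD $[n,k,d]$ code $C$ with $d=d_{LCD}(n,k)$. Because $k$ is odd, $C$ is odd-like. Proposition \ref{prop:odd-like} then tells us in either case that there is a coordinate $i$ such that the shortened code $C_i$ is an LCD code with parameters $[n-1,k-1,\ge d]$: if $C$ contains $\mathbf 1$ this holds for every coordinate, and if not, the argument in that proposition still produces at least one coordinate where the column sum $s_i$ is odd and hence the shortened code is LCD with minimum weight at least $d$.

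The key steps, in order, are: (1) observe $k$ odd $\Rightarrow$ $C$ odd-like, using Theorem \ref{thm:even}; (2) invoke Proposition \ref{prop:odd-like} to obtain a coordinate $i$ with $C_i$ an LCD code; (3) recall from the proof of that proposition that shortening on such a coordinate yields an $[n-1,k-1,d']$ code with $d'\ge d$ (shortening never decreases the minimum weight, and indeed the surviving codewords are a subset of the codewords of $C$ minus their $i$-th coordinate, all of weight $\ge d$); (4) conclude $d_{LCD}(n-1,k-1)\ge d' \ge d = d_{LCD}(n,k)$.

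The only subtlety — and the step I would be most careful about — is making sure that in \emph{every} case covered by Proposition \ref{prop:odd-like} we really do get a shortened code of dimension exactly $k-1$ with minimum weight at least $d$, rather than merely a punctured code of dimension $k$. When $C$ does not contain $\mathbf 1$, the proposition guarantees both a good shortened coordinate and a good punctured coordinate; we simply discard the punctured option and keep the shortened one. When $C$ does contain $\mathbf 1$, the proposition says \emph{all} shortened codes are LCD, which is more than enough. So in both branches a valid shortened LCD $[n-1,k-1,\ge d]$ code exists, and the inequality follows. I would also note that the dimension drops by exactly one because we shorten on a coordinate that is not identically zero (the column has an odd, hence nonzero, number of ones), so $C_i$ genuinely has dimension $k-1$.
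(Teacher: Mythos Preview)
Your proposal is correct and follows essentially the same route as the paper: use that $k$ odd forces $C$ to be odd-like (via Theorem~\ref{thm:even}), then invoke Proposition~\ref{prop:odd-like} to find a coordinate whose shortened code is an LCD $[n-1,k-1,\ge d]$ code. Your write-up is in fact more explicit than the paper's one-line proof, carefully splitting into the $\mathbf{1}\in C$ and $\mathbf{1}\notin C$ cases and checking that the shortened dimension really drops to $k-1$.
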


\begin{proof}
Let $C$ be an LCD $[n,k,d]$ code. Since $k$ is odd then there is at least one coordinate position $i$ such that the shortened $[n-1,k-1,\ge d]$ code on the $i$-th coordinate is LCD. Hence $d_{LCD}(n,k)\le d_{LCD}(n-1,k-1)$.
\end{proof}

\begin{corollary}\label{cor:odd2}
If $k$ is even and $d_{LCD}(n,k)>d_{LCD}(n-1,k-1)$, then all LCD $[n,k,d_{LCD}(n,k)]$ codes are even.
\end{corollary}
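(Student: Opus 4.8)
The plan is to argue directly by contradiction. Let $C$ be an arbitrary LCD $[n,k,d_{LCD}(n,k)]$ code with $k$ even, and suppose $C$ is not even, i.e.\ $C$ is odd-like. The goal is then to produce an LCD $[n-1,k-1]$ code whose minimum weight is at least $d_{LCD}(n,k)$; this gives $d_{LCD}(n-1,k-1)\ge d_{LCD}(n,k)$ and contradicts the hypothesis $d_{LCD}(n,k)>d_{LCD}(n-1,k-1)$.

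The engine is Proposition~\ref{prop:odd-like}. Its proof nowhere uses that the dimension is odd --- the determinant computations $\det(G_1G_1^T)=(-1)^{s-1}(s-1)$ and $\det(G_0G_0^T)=(-1)^{s-2}(s-2)$ do not involve $k$ --- so it applies to our odd-like code $C$ of even dimension $k$ just as well. The one consequence I need to extract is: every odd-like LCD $[n,k,d]$ code has at least one coordinate $i$ such that the shortened code $C_i$ is again LCD. Indeed, if $C$ contains the all-ones vector this holds for every coordinate; if not, the column-weight argument produces a coordinate $i$ whose column weight $s_i$ in a generator matrix is odd (the $s_i$ cannot all be even, else the rows would sum to $0$), and for such $i$ the code $C_i$ is LCD. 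Since $s_i\ge 1$ the $i$-th column is nonzero, so $C_i$ has dimension exactly $k-1$, and shortening cannot decrease the minimum weight, so $C_i$ is an LCD $[n-1,k-1,d']$ code with $d'\ge d$ --- precisely the parameters recorded in the proof of Proposition~\ref{prop:odd-like}.

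Taking $d=d_{LCD}(n,k)$, this code $C_i$ witnesses $d_{LCD}(n-1,k-1)\ge d'\ge d_{LCD}(n,k)$, contradicting the assumption. Hence $C$ must be even; and since $C$ was an arbitrary LCD code with parameters $[n,k,d_{LCD}(n,k)]$, all of them are even.

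I do not expect a serious obstacle: the statement is essentially a bookkeeping consequence of Proposition~\ref{prop:odd-like}. The only point requiring a moment's care is confirming that that proposition is being invoked in the right generality --- namely that its conclusion ``some shortened code is LCD'' holds for even dimension $k$, and that the shortened code has dimension exactly $k-1$ --- after which the contradiction with $d_{LCD}(n,k)>d_{LCD}(n-1,k-1)$ is immediate.
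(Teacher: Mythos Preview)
Your argument is correct and matches the paper's own proof essentially verbatim: assume $C$ is odd-like, invoke Proposition~\ref{prop:odd-like} to obtain an LCD $[n-1,k-1,\ge d]$ shortened code, and derive the contradiction $d_{LCD}(n-1,k-1)\ge d_{LCD}(n,k)$. The extra care you take in confirming that the proposition applies for even $k$ and that the shortened code has dimension exactly $k-1$ is welcome, but the paper simply takes these for granted.
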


\begin{proof}
Let $C$ be an $[n,k,d=d_{LCD}(n,k)]$ LCD code.
If $C$ is odd-like, then there is at least one coordinate position $i$ such that the shortened $[n-1,k-1,\ge d]$ code on the $i$-th coordinate is LCD. Hence $d_{LCD}(n-1,k-1)\ge d_{LCD}(n,k)$ which is not the case. Hence the code $C$ must be an even-like code.
\end{proof}

The following proposition shows the connection between LCD codes of length $n$ and those of length $n+2$ having a dual distance 2, with the same dimension $k$. It is a simple generalization of \cite[Proposition 2]{Massey}.

\begin{proposition}\label{prop:2}
Let $C$ be a binary $[n,k,d]$ code with a generator matrix $G$, $G'=(v^T \ v^T \ G)$ and $C'=\langle G'\rangle$, where $v\in\F_2^k$. The code $C$ is LCD if and only if $C'$ is an LCD code.
\end{proposition}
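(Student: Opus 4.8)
The plan is to reduce everything to Massey's criterion (Theorem~\ref{thm:Massey}), which states that a code is LCD exactly when $MM^T$ is nonsingular for any generator matrix $M$ of it. So the whole argument comes down to one short matrix computation over $\F_2$, plus a sanity check that $G'$ really is a generator matrix.

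First I would check that $G'$ has rank $k$, so that $C'$ is genuinely an $[n+2,k]$ code and $G'$ is one of its generator matrices. The rows of $G'$ are the rows of $G$ with two coordinates prepended on the left; prepending columns cannot lower the rank, and a matrix with $k$ rows has rank at most $k$, so $\mathrm{rank}\,G'=\mathrm{rank}\,G=k$. Next comes the key step. Write $\mathbf{v}$ for the column vector whose entries are the coordinates of $v$, so that the first two columns of $G'$ are both equal to $\mathbf{v}$. Expanding the product $G'(G')^T$ as the sum of the outer products of the columns of $G'$ gives
\[
G'(G')^T=\mathbf{v}\mathbf{v}^T+\mathbf{v}\mathbf{v}^T+GG^T=2\,\mathbf{v}\mathbf{v}^T+GG^T=GG^T ,
\]
the last equality because $2\,\mathbf{v}\mathbf{v}^T=O$ in characteristic $2$. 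Hence $G'(G')^T=GG^T$ as matrices over $\F_2$, and in particular one of them is nonsingular precisely when the other is.

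Applying Theorem~\ref{thm:Massey} twice then finishes the proof: $C$ is LCD $\iff$ $GG^T$ is nonsingular $\iff$ $G'(G')^T$ is nonsingular $\iff$ $C'$ is LCD. I would also remark, to connect with the sentence preceding the statement, that since the first two coordinates of every codeword of $C'$ coincide, the vector $(1,1,0,\ldots,0)$ lies in $(C')^\perp$, so $C'$ has dual distance at most $2$. The only place needing any care is the bookkeeping in the outer-product expansion of $G'(G')^T$ — making sure the two repeated columns contribute identical terms whose sum vanishes mod~$2$ — and this is the ``main obstacle'' only in the weakest possible sense; there is no genuine difficulty here.
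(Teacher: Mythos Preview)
Your proof is correct and takes essentially the same approach as the paper: both show $G'(G')^T=GG^T$ over $\F_2$ and then invoke Theorem~\ref{thm:Massey}. The only cosmetic difference is that the paper phrases the computation row-wise (showing $(a,a,x)\cdot(b,b,y)=ab+ab+x\cdot y=x\cdot y$) while you phrase it column-wise via outer products; the content is identical, and your added sanity check on the rank of $G'$ and the remark on dual distance $\le 2$ are also present (implicitly or explicitly) in the paper.
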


\begin{proof}
Let $x,y\in C$, $x'=(a,a,x)\in C'$, and $y'=(b,b,y)\in C'$, $a,b\in\F_2$. Then $x'\cdot y'=ab+ab+x\cdot y=x\cdot y$.
This turns out that $G'G'^T=GG^T$. It follows from Theorem \ref{thm:Massey} that if one of these codes is LCD then the other is also LCD.

It is worth mentioning that if the dual distance of $C$ is $d^\perp >1$ then the dual distance of $C'$ is equal to 2. So, if we take an LCD code with dual distance 2 and remove the corresponding equal coordinates from all the codewords, we will get an LCD code again.
\end{proof}

\begin{corollary}\label{cor:2}
If $d_{LCD}(n,k)$ is odd, then $d_{LCD}(n+2,k)\ge d_{LCD}(n,k)+1$.
\end{corollary}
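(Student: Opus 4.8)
The plan is to take an LCD code attaining $d_{LCD}(n,k)$ and lengthen it by two coordinates via Proposition \ref{prop:2}, choosing the repeated column so that every minimum-weight codeword gains weight $2$ while no codeword loses weight. Let $C$ be an LCD $[n,k,d]$ code with $d=d_{LCD}(n,k)$, here assumed odd, and let $G$ be a generator matrix of $C$ with rows $g_1,\dots,g_k$. Put $v=G\mathbf{1}^T\in\F_2^k$, the vector of row parities, so that $v_i\equiv\wt(g_i)\pmod 2$. The observation that drives the argument is that for every $\lambda\in\F_2^k$ we have $\langle\lambda,v\rangle\equiv(\lambda G)\cdot\mathbf{1}^T\equiv\wt(\lambda G)\pmod 2$, i.e.\ $\langle\lambda,v\rangle$ is the parity of the weight of $\lambda G$.

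I would then form $G'=(v^T\ v^T\ G)$ and $C'=\langle G'\rangle$. By Proposition \ref{prop:2}, $C'$ is LCD; it has length $n+2$, and its dimension is $k$ because the rows of $G'$ are linearly independent (they already are in their last $n$ coordinates). It remains only to bound the minimum weight of $C'$ below by $d+1$.

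For a nonzero $\lambda\in\F_2^k$ the associated codeword of $C'$ is $(\langle\lambda,v\rangle,\langle\lambda,v\rangle,\lambda G)$, of weight $2\langle\lambda,v\rangle+\wt(\lambda G)$, and $\wt(\lambda G)\ge d$ since $\lambda\neq 0$. If $\wt(\lambda G)\ge d+1$, this weight is already at least $d+1$; if $\wt(\lambda G)=d$, then, $d$ being odd, the observation forces $\langle\lambda,v\rangle=1$, so the weight is $d+2$. In both cases the weight is at least $d+1$, so $C'$ is an LCD $[n+2,k,\ge d+1]$ code and $d_{LCD}(n+2,k)\ge d_{LCD}(n,k)+1$. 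The proof has essentially no hard step: the only real decision is the choice of $v$, and after that the conclusion is a single weight count. The one thing worth double-checking is that appending $v^T$ twice is exactly the situation covered by Proposition \ref{prop:2}, so that its LCD conclusion transfers without change.
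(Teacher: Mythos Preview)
Your proof is correct and follows the same approach as the paper's: the vector $v$ you construct is exactly the parity-check column for which $(v\ G)$ generates an even $[n+1,k]$ code, and the paper argues (more tersely) that this forces minimum weight $d+1$, then invokes Proposition~\ref{prop:2} just as you do. Your write-up is simply a more explicit version of the same computation.
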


\begin{proof}
Let $C$ be a binary LCD $[n,k,d=d_{LCD}(n,k)]$ code with a generator matrix $G$. If $v\in\F_2^k$ is the vector for which $(v G)$ generates an even $[n+1,k,d_1]$ code then $d_1=d+1$. Hence $G'=(v v G)$ generates an LCD $[n+2,k,\ge d+1]$ code and therefore $d_{LCD}(n+2,k)\ge d_{LCD}(n,k)+1$.
\end{proof}

We also present two corollaries that prove the existence of odd-like and even-like LCD codes of given length and dimension.

\begin{corollary}\label{cor:even}
Let $k$ be an even positive integer and $n>k$.
If $k\ge 4$ then there exist LCD $[n,k]$ even-like codes with $d^\perp\ge 2$. LCD $[n,2]$ even-like codes with dual distance $d^\perp\ge 2$ exist if and only if $n$ is odd, $n>1$.
\end{corollary}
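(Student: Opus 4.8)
The plan is to build explicit even-like LCD codes of the required length and dimension, using the basis criterion of Theorem~\ref{thm:even} together with the length-extension operation of Proposition~\ref{prop:2}. The key observation is that the existence of an even-like LCD $[n,k]$ code with $d^\perp\ge 2$ is, by Theorem~\ref{thm:even}, equivalent to the existence of $k/2$ ``hyperbolic pairs'' $c_i,c_i'$ satisfying (i)--(iii); the condition $d^\perp\ge 2$ means no coordinate is identically zero on the whole code, i.e.\ the generator matrix has no zero column. So I would first treat a small base case and then lengthen it.

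First I would handle $k=2$. A $[n,2]$ even-like LCD code is generated by two vectors $c_1,c_1'$ with $c_1\cdot c_1=c_1'\cdot c_1'=0$ and $c_1\cdot c_1'=1$; since the code is even-like of dimension $2$, the four codewords are $0,c_1,c_1',c_1+c_1'$, all of even weight, and $\wt(c_1+c_1')=\wt(c_1)+\wt(c_1')-2(c_1\cdot c_1')$, which forces $\wt(c_1)+\wt(c_1')$ even and hence, because each weight is even, consistent; counting coordinates by how many of $c_1,c_1'$ are supported there and using $c_1\cdot c_1'=1$ shows the number of coordinates in exactly one of the supports is odd, from which one gets that the total length $n$ must be odd. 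Conversely, for odd $n>1$ the code generated by $c_1=(1,1,0,\dots,0)$ and $c_1'=(0,1,1,0,\dots,0)$ extended suitably — more cleanly, apply Proposition~\ref{prop:2} to the $[1,1]$ (or a small) LCD code to reach every odd length — produces an even-like $[n,2]$ LCD code with no zero column, giving the ``if and only if'' for $k=2$.

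Next, for $k\ge 4$ even and $n>k$, I would start from the even-like LCD code $\overline{C}$ of length $k+1$ obtained from Proposition~\ref{prop:n+1}(i) applied to the odd-like LCD $[k,k]$ code $\F_2^k$ (whose standard basis satisfies Theorem~\ref{thm:odd}); this $\overline{C}$ is an even-like LCD $[k+1,k]$ code, and since $k\ge 4$ its minimum distance is at least $2$ and — the point to check — it has no zero column, so $d^\perp\ge 2$. Having a single even-like LCD $[k+1,k]$ code with $d^\perp\ge2$, I would then reach every larger length $n$ by repeated use of Proposition~\ref{prop:2}: doubling a suitably chosen column adds $2$ to the length while preserving ``even-like'' (the added coordinates are equal, so weights stay even), preserving ``LCD'' (Proposition~\ref{prop:2}), and preserving the no-zero-column property; to also hit lengths of the opposite parity from $k+1$, I would instead pad the length-$(k+1)$ code with one extra zero coordinate and then note the resulting code fails $d^\perp\ge2$, so rather than padding I would produce two base codes of lengths $k+1$ and $k+2$ (the latter by one application of Proposition~\ref{prop:2}) and then lengthen each by even increments to cover all $n>k$.

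The main obstacle, and the step I would spend the most care on, is the $k=2$ necessity direction — showing that an even-like LCD $[n,2]$ code with $d^\perp\ge2$ forces $n$ odd — and, more generally, making sure at each construction step that the ``$d^\perp\ge2$'' (no zero column) condition is genuinely maintained, since Propositions~\ref{prop:n+1} and~\ref{prop:2} as stated control only the LCD property and the minimum distance, not the dual distance. For $k=2$ the parity argument is a short counting computation; for $k\ge4$ one just has to verify the explicit length-$(k+1)$ generator matrix $(\mathbf 1\mid I_k)$ has no zero column (immediate, as $\mathbf 1$ has all entries $1$) and that doubling a nonzero column keeps it so. Everything else is a routine invocation of the already-proved propositions.
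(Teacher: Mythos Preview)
Your overall strategy is right, and the $[k+1,k]$ base code from Proposition~\ref{prop:n+1}(i) followed by repeated length-$+2$ extensions via Proposition~\ref{prop:2} is exactly how the paper handles odd lengths. The problem is the even-length case for $k\ge 4$.

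Proposition~\ref{prop:2} adds \emph{two} coordinates, not one. Starting from your even-like LCD $[k+1,k]$ code (and $k+1$ is odd because $k$ is even), iterating Proposition~\ref{prop:2} produces lengths $k+1,\,k+3,\,k+5,\dots$ --- you never reach an even $n$. Your sentence ``produce two base codes of lengths $k+1$ and $k+2$ (the latter by one application of Proposition~\ref{prop:2})'' is the gap: one application of Proposition~\ref{prop:2} to the $[k+1,k]$ code gives a $[k+3,k]$ code, not $[k+2,k]$. Applying Proposition~\ref{prop:2} instead to $\F_2^k$ does yield a $[k+2,k]$ LCD code, but that code is odd-like (every row of $(v^T\,v^T\,I_k)$ has odd weight), so it does not help. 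You therefore have no even-length base code, and the statement for even $n>k$ remains unproved. The paper supplies the missing base case differently: it exhibits the $[k+2,2,3]$ LCD code $\langle(1110\ldots0),\ \mathbf{1}\rangle$ (here $k\ge 4$ is needed), notes that it contains $\mathbf{1}$, and takes its dual, which is an even-like LCD $[k+2,k]$ code with $d^\perp=3\ge 2$. From there Proposition~\ref{prop:2} covers all even $n>k$.

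Two smaller points on the $k=2$ part. For sufficiency, applying Proposition~\ref{prop:2} to ``the $[1,1]$ code'' cannot give a dimension-$2$ code; you need a genuine $[n_0,2]$ base case such as the parity-check $[3,2,2]$ code (which is even-like LCD with $d^\perp=2$) and then extend. For necessity, your parity count is essentially right but stated slightly off: with $a=|\mathrm{supp}(c_1)\cap\mathrm{supp}(c_1')|$, $b,c$ the sizes of the symmetric-difference pieces, one gets $a$ odd (from $c_1\cdot c_1'=1$) and hence $b,c$ odd (from even weights), so $b+c$ is even, not odd; the conclusion $n$ odd comes from $d^\perp\ge 2$ forcing the ``both zero'' block to be empty, whence $n=a+b+c$ is a sum of three odd numbers. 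This is exactly the $a\equiv b\equiv c\equiv 1\pmod 2$ case in the paper's proof.
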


\begin{proof}
Let $k=2m$. The parity check $[2m+1,2m,2]$ code is an LCD even-like code. Suppose that there exists an even-like LCD $[2m+2s+1,2m]$ code with dual distance at least 2. According to Proposition \ref{prop:2}, there is an even-like $[2m+2s+3,2m]$ LCD code with $d^\perp=2$. By mathematical induction an LCD even-like $[n,k]$ code for odd $n$ exists.

If $m\ge 2$, the $[2m+2,2,3]$ code $C=\langle (1110\ldots 0),(11\ldots 1)\rangle$ is an LCD code that contains the all-ones vector. Hence its dual code is an even-like LCD $[2m+2,2m]$ code with dual distance 3. Again, by mathematical induction and Proposition \ref{prop:2} we prove, that there are LCD even-like $[n,k]$ codes for all even $n>k$.

Let $k=2$ and $C=\langle \underbrace{1\ldots 1}_a \underbrace{1\ldots 1}_b \underbrace{0\ldots 0}_c,
\underbrace{1\ldots 1}_a \underbrace{0\ldots 0}_b \underbrace{1\ldots 1}_c\rangle$.  The code $C$ is even if and only if $a\equiv b\equiv c\pmod 2$. When $a$, $b$ and $c$ are even integers then the code is self-orthogonal, and when they are odd integers, $C$ is an LCD code. In the second case $n=a+b+c$ is odd.
\end{proof}

\begin{corollary}\label{cor:odd}
If $k\ge 2$ and $n\ge k+2$ are integers then there exist LCD $[n,k]$ odd-like codes with $d^\perp\ge 2$.
\end{corollary}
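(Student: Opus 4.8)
The plan is to exhibit a single explicit generator matrix that settles almost every admissible pair $(n,k)$, and to treat the one exceptional case separately.

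Set $u=e_1+e_2\in\F_2^k$, let $A$ be the $k\times(n-k)$ matrix each of whose columns equals $u$, put $G=(I_k\mid A)$ and $C=\langle G\rangle$. Since $n-k\ge 2$ the block $A$ is nonempty, and since $k\ge 2$ each of its columns is the nonzero vector $u$; together with the columns of $I_k$ this shows $G$ has no zero column, hence $d^\perp(C)\ge 2$, and $\dim C=k$. For the LCD property I would compute $GG^T$: writing $A=u\,\mathbf 1^T$ with $\mathbf 1\in\F_2^{n-k}$ the all-ones vector gives $AA^T=(n-k)\,uu^T$, so $GG^T=I_k+(n-k)\,uu^T$, which equals $I_k$ when $n-k$ is even and $\left(\begin{smallmatrix}0&1\\1&0\end{smallmatrix}\right)\oplus I_{k-2}$ when $n-k$ is odd; in both cases $\det(GG^T)=1$, so $C$ is LCD by Theorem \ref{thm:Massey}. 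Finally, rows $1$ and $2$ of $G$ have weight $1+(n-k)$ while every later row has weight $1$, so $C$ contains a codeword of odd weight, that is, $C$ is odd-like, whenever $n-k$ is even (use row $1$) or $k\ge 3$ (use row $3$). This proves the statement for all $k\ge 3$ and $n\ge k+2$, and also for $k=2$ when $n$ is even.

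The remaining possibility, $k=2$ with $n$ odd, is the step I expect to be the real obstacle. Every generator matrix of an $[n,2]$ code with $d^\perp\ge 2$ has all $n$ columns among the three nonzero vectors $\left(\begin{smallmatrix}1\\0\end{smallmatrix}\right),\left(\begin{smallmatrix}0\\1\end{smallmatrix}\right),\left(\begin{smallmatrix}1\\1\end{smallmatrix}\right)$, say with multiplicities $a,b,c$ and $a+b+c=n$; then $\det(GG^T)\equiv(a+c)(b+c)+c\pmod 2$, and when $n$ is odd the only parity pattern of $(a,b,c)$ with odd sum making this $\equiv 1$ is $a,b,c$ all odd, in which case both generators, of weights $a+c$ and $b+c$, have even weight and $C$ is even-like. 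So, exactly as for even-like codes in Corollary \ref{cor:even}, for $k=2$ an odd-like LCD code with $d^\perp\ge 2$ exists only for even $n$, and the statement should be read with that extra hypothesis in the case $k=2$; with it, the cases above are exhaustive. (One can also organise the $k=2$, $n$ even case inductively: Corollary \ref{cor:even} supplies an even-like LCD $[n-1,2]$ code with $d^\perp\ge 2$, Proposition \ref{prop:n+1}(ii) turns it into an odd-like LCD $[n,2]$ code with $d^\perp\ge 2$ — prepending a column of ones does not create a zero column — and Proposition \ref{prop:2} then lifts any odd-like LCD code from length $m$ to length $m+2$ while keeping $d^\perp\ge 2$.)
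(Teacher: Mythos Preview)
Your argument is correct and takes a different, more self-contained route than the paper. The paper argues by duality and induction: it asserts that an LCD $[k+2,2,\ge 2]$ code $C$ with $\mathbf{1}\notin C$ exists, so that $C^\perp$ is an odd-like LCD $[k+2,k]$ code with $d^\perp\ge 2$, and then appeals to Proposition~\ref{prop:2} to pass from length $n$ to $n+2$. Since Proposition~\ref{prop:2} changes the length by $2$, this single base case reaches only lengths $n\equiv k\pmod 2$; the paper's sketch does not supply a second base case for the other parity. Your explicit generator matrix $G=(I_k\mid u\,\mathbf{1}^T)$ settles both parities of $n-k$ at once for every $k\ge 3$, with no induction and no appeal to duality.

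Your treatment of $k=2$ is also correct and in fact exposes a defect in the statement of the corollary itself. For $k=2$ and $n$ odd, your parity analysis of $(a,b,c)$ shows that the only way to have $\det(GG^T)\equiv 1$ is $a,b,c$ all odd, which forces every nonzero codeword to have even weight; hence no odd-like LCD $[n,2]$ code with $d^\perp\ge 2$ exists, and the corollary should carry the extra hypothesis ``$n$ even'' when $k=2$, exactly as you propose. This does no real damage to the paper: the only place Corollary~\ref{cor:odd} is used is in the proof of Proposition~\ref{prop:dd2}, jointly with Corollary~\ref{cor:even}, merely to guarantee that \emph{some} LCD code of the required length and dimension with $d^\perp\ge 2$ exists; for $k=2$ and $n$ odd, Corollary~\ref{cor:even} already provides an even-like one.
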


\begin{proof} The proof is similar to the proof of Corollary \ref{cor:even}. Let $k$ be an integer, $k\ge 2$. We first show that an LCD $[k+2,2,\ge 2]$ code $C$ with $\mathbf{1}\not\in C$  exists. Its dual code is an odd-like LCD $[k+2,k]$ code with dual distance $\ge 2$. Then by mathematical induction and Proposition \ref{prop:2} we prove, that there are LCD odd-like $[n,k]$ codes for all $n\ge k+2$.
\end{proof}

Some other constructions which allow modifying the parameters of codes and to obtain LCD codes from other codes, which can be LCD or not, are given in \cite{Carlet_Guilley}.

\section{Largest minimum weights for binary LCD codes}
\label{sect:dLCD}

The largest minimum weights $d_{LCD}(n, k)$ among all binary LCD $[n, k]$ codes were determined
in \cite{Kim-LCD}, \cite{Harada-LCD} and \cite{Araya-Harada} for $n \le 12$, $13\le n \le 16$, and $17\le n \le 24$, respectively. In this
section, we extend the results to lengths up to 40.

Obviously, the only linear $[n,n]$ code is $\F_2^n$ and it is an LCD code, so $d_{LCD}(n,n)=1$ for all $n\ge 1$.
According to \cite[Proposition 3.2]{Dougherty-Kim-Sole-LCD}, (i) if $n$ is odd, then $d_{LCD}(n, 1) = n$ and $d_{LCD}(n, n-1) = 2$, (ii) if $n$ is even, then $d_{LCD}(n, 1) = n-1$ and $d_{LCD}(n,n-1) = 1$.

It was shown in \cite{Kim-LCD} that
\[
d_{LCD}(n,2) = \left\{\begin{array}{ll}

\lfloor \frac{2n}{3}\rfloor& \mbox{if} \ n\equiv 1,2,3,4\pmod 6\\
\lfloor \frac{2n}{3}\rfloor-1& \mbox{otherwise},
\end{array}\right.
\]
for $n\ge 2$. The optimal LCD codes of dimension 2 have been classified in \cite{Harada-LCD}.

For $k=3$, Harada and Saito \cite{Harada-LCD} have proved that
\[
d_{LCD}(n,3) = \left\{\begin{array}{ll}

\lfloor \frac{4n}{7}\rfloor& \mbox{if} \ n\equiv 3,5\pmod 7\\
\lfloor \frac{4n}{7}\rfloor-1& \mbox{otherwise},
\end{array}\right.
\]
for $n\ge 3$.

In \cite{Araya-Harada} and \cite{Araya-Harada-Saito} the authors have established the following values of the function $d_{LCD}$:
\[
d_{LCD}(n,4) = \left\{\begin{array}{ll}

\lfloor \frac{8n}{15}\rfloor& \mbox{if} \ n\equiv 5,9,13\pmod{15}\\
\lfloor \frac{8n}{15}\rfloor-1& \mbox{if} \ n\equiv 1,2,3,4,6,7,8,10,11,12,14\pmod{15}\\
\lfloor \frac{8n}{15}\rfloor-2& \mbox{if} \ n\equiv 0\pmod{15},
\end{array}\right.
\]
for $n\ge 4$, and
\[
d_{LCD}(n,5) = \left\{\begin{array}{ll}

\lfloor \frac{16n}{31}\rfloor-1& \mbox{if} \ n\equiv 3,5,7,11,19,20,22,26\pmod{31}\\
\lfloor \frac{16n}{31}\rfloor-2& \mbox{if} \ n\equiv 4\pmod{31},
\end{array}\right.
\]
for $n\ge 5$.

Another general value of the considered function for codes with rate that tents to 1 is given in \cite[Proposition 3.2]{Kim-LCD}, namely
\[ d_{LCD}(n,n-i)=2 \ \ \mbox{for} \ \ i\ge 2, \ \ n\ge 2^i.\]

Instead of the given exact values of $d_{LCD}(n,k)$, we use some bounds. Obviously, if $C$ is an $[n,k,d]$ LCD code, the $[n+1,k,d]$ code $(0\vert C)$ is also LCD. Hence $d_{LCD}(n,k)\le d_{LCD}(n+1,k)$. Another important inequality is the following

\begin{theorem}{\rm \cite[Theorem 8]{CarletLCD2019}}\label{thm:k-1}
If $2\le k\le n$, then $d_{LCD}(n,k)\le d_{LCD}(n,k-1)$.
\end{theorem}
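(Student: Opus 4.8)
The plan is to produce, from any LCD $[n,k,d]$ code $C$, an LCD $[n,k-1,d']$ code with $d'\ge d$; taking $C$ with $d=d_{LCD}(n,k)$ this yields $d_{LCD}(n,k-1)\ge d_{LCD}(n,k)$. The recurring tool is the following remark: if $D$ is an LCD code and $w\in D$ satisfies $w\cdot w=1$, then $D_w:=D\cap w^{\perp}$ is an LCD code with $\dim D_w=\dim D-1$ and $d(D_w)\ge d(D)$. Indeed $w\notin w^{\perp}$, so $D_w$ is a hyperplane of $D$ and $D=\langle w\rangle\oplus D_w$; with respect to a basis consisting of $w$ together with a basis of $D_w$, the Gram matrix of $D$ is block-diagonal with blocks $(1)$ and $\mathrm{Gram}(D_w)$, and since the Gram matrix of $D$ is nonsingular (Theorem \ref{thm:Massey}) while Gram matrices of one code with respect to different bases are congruent, $\mathrm{Gram}(D_w)$ is nonsingular too; the weight bound holds because $D_w\subseteq D$.

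If $C$ is odd-like it contains a codeword $w$ of odd weight, i.e.\ with $w\cdot w=1$, and then $C\cap w^{\perp}$ is already an LCD $[n,k-1,\ge d]$ code (this amounts to dropping one vector from the orthonormal basis of Theorem \ref{thm:odd}). So assume $C$ is even-like; then $k$ is even by Theorem \ref{thm:even}. Now no subcode of $C$ can serve, since any subcode of an even-like code is even-like and an even-like LCD code must have even dimension, whereas $k-1$ is odd. The strategy is therefore to gain one coordinate, perform the construction there, and then delete that coordinate. By Proposition \ref{prop:n+1}(ii) there is an odd-like LCD $[n+1,k,d_1]$ code $\overline{C}$ with $d_1\in\{d,d+1\}$, so $d_1\ge d$.

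Let $e_1,\ldots,e_{n+1}$ be the standard basis of $\F_2^{n+1}$ and let $\pi$ be the projection of $\F_2^{n+1}$ onto $\overline{C}$ along $\overline{C}^{\perp}$, well defined because the sum is direct (Corollary \ref{cor:sum}). I claim $\pi(e_j)$ has odd weight for some $j$: otherwise $\pi(e_j)$ has even weight for all $j$, and since each codeword $c\in\overline{C}$ equals $\pi(c)=\sum_{j\in\mathrm{supp}(c)}\pi(e_j)$ and a sum of even-weight vectors has even weight, $\overline{C}$ would be even-like, contradicting that it is odd-like. Fix such a $j$ and set $w=\pi(e_j)\in\overline{C}$; then $w\cdot w=\wt(w)\equiv1\pmod2$, so by the remark $E:=\overline{C}\cap w^{\perp}$ is an LCD code of dimension $k-1$ with $d(E)\ge d_1\ge d$. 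Moreover $e_j-w\in\overline{C}^{\perp}$, so $v\cdot w=v\cdot e_j=v_j$ for every $v\in\overline{C}$; hence $E=\{v\in\overline{C}:v_j=0\}$ has its $j$-th coordinate identically zero. After a permutation of coordinates $E=(O\,|\,E')$, where $E'$ has length $n$, dimension $k-1$, and $d(E')=d(E)\ge d$, and $E'$ is LCD because an LCD code of the form $(O\,|\,E')$ forces $E'$ to be LCD (as noted at the start of Section \ref{sect:properties}). In both cases $d_{LCD}(n,k-1)\ge d$, which is the claim.

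The delicate point is the even-like case. One must first recognize that a length-preserving subcode construction cannot exist there, then pass to length $n+1$ via Proposition \ref{prop:n+1}, and finally exploit precisely the odd-likeness of the lengthened code $\overline{C}$ to guarantee that some $\pi(e_j)$ has odd weight — which is exactly what lets the ``add a coordinate, then remove it'' trick close the argument.
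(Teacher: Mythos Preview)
The paper does not prove this theorem at all; it is quoted from \cite{CarletLCD2019} and used as a black box. So there is no in-paper proof to compare against.

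Your argument is correct. The odd-like case is immediate from the orthonormal basis of Theorem~\ref{thm:odd} (drop one basis vector), and your treatment of the even-like case is valid: Proposition~\ref{prop:n+1}(ii) supplies an odd-like LCD code $\overline{C}$ of length $n+1$, the projection argument correctly identifies a coordinate $j$ where $w=\pi(e_j)$ has odd weight, the block-diagonal Gram computation shows $E=\overline{C}\cap w^{\perp}$ is LCD of dimension $k-1$, and the identity $v\cdot w=v_j$ for $v\in\overline{C}$ shows $E$ is exactly the set of codewords of $\overline{C}$ vanishing at $j$, so deleting that coordinate returns an LCD $[n,k-1,\ge d]$ code.

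It is worth noting that your even-like argument is essentially a refined version of what the paper does in Proposition~\ref{prop:odd-like}: there it is shown, via the column-sum parities $s_i$ of the orthonormal generator matrix, that for any odd-like LCD code some shortened code is LCD. Your projection $\pi(e_j)$ has odd weight precisely when the $j$-th column sum of the orthonormal generator matrix is odd, and $E'$ is the shortened code $\overline{C}_j$; so the two approaches coincide at the level of which coordinate to shorten. Your projection viewpoint is cleaner in that it avoids the explicit matrix bookkeeping and makes the LCD property of the shortened code transparent via the block-diagonal Gram matrix.
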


In addition to the inequalities, presented in Theorem \ref{thm:k-1} and the corollaries from the previous section, we use the bound $d_{LCD}(n,k)\le d(n,k)$, where $d(n,k)$ is the largest minimum weight $d$ for which a linear $[n,k,d]$ code exists. We look for the values of $d(n,k)$ in the Code Tables \cite{Grassl-table}.

Moreover, we have proved the nonexistence of LCD $[n,k,\ge d]$ codes for given integers $n$, $k$ and $d$. To do this, we use the program \textsc{Generation} of the software package \textsc{QExtNewEdition} \cite{Generation}. This program classifies the linear $[n,k,\ge d]$ codes with dual distance $\ge d^\perp$ for given integers $n$, $k$, $d$ and $d^\perp$. With \textsc{Generation} we have also classified or only constructed LCD codes with some parameters.

The program \textsc{Generation} has two main parts.
The first one implements a  construction method for  generator matrices.
This method is based on row by row backtracking  with $k \times k$ identity matrix as a  fixed part.
In the $m$-th step the considered matrices have the following form
$$G=(I_k \ A') = \left(\begin{array}{c|c|c}
I_{m}&O&A_m \\
\hline
O&I_{k-m}&X\\
\end{array}\right)$$
where the columns of the matrix $A_m$ are lexicographically ordered, and $X$ is the unknown part of $G$.
In that case any vector $v_m$ of length $n-k$ which fits for the $m$-th row of $A_m$ strictly depends on one of the  vectors put on the previous rows.

The second part of the program is related to the identification of non-equivalent objects in the whole generation process. The general method which we apply is known as \textit{canonical augmentation} \cite{Kaski_Ostergard,McKay1998}. Description for this specific case is given in \cite{Generation}.
The basic idea is to accept only non-equivalent objects without an equivalence test (in some cases with a small number of tests) at every step of the generation process. Instead of an equivalence test, a canonical form of the objects and a canonical ordering of orbits are used.
So for every vector $v_m$ in the construction that fits as an $m$-th row (we call these vectors \textit{possible} solutions),  the algorithm decides acceptance (possible solution becomes \textit{real})  or rejection.
In this model, the different branches of the search tree are independent and therefore it is easy for parallel implementation.

The program \textsc{Generation} allows a lot of restrictions on the considered codes. To construct and classify LCD codes, we use the following proposition.

\begin{proposition}
Let $C$ be a binary linear $[n,k,d]$ code and $\dim (C\cap C^\perp)=s$. If $C_i$ is the shortened code of $C$ on $i$-th coordinate for some $i\in\{1,\ldots,n\}$ then $\dim (C_i\cap C_i^\perp)\le s+1$.
\end{proposition}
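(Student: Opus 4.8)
The plan is to reduce everything to ranks of Gram matrices. Recall that for any generator matrix $G$ of a binary $[n,k]$ code $C$, a codeword $xG$ lies in $C^\perp$ precisely when $xGG^T=0$; since $x\mapsto xG$ is injective, this yields the identity
\[
\dim(C\cap C^\perp)=k-\mathrm{rank}(GG^T),
\]
where the rank is taken over $\F_2$. I would first record this identity and then apply it to both $C$ and $C_i$. Without loss of generality take $i=1$.

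If the first coordinate is identically zero on $C$, deleting it is a linear isometry, so $C_1$ has a generator matrix $G_1$ with $G_1G_1^T=GG^T$, whence $\dim(C_1\cap C_1^\perp)=s$, which is even stronger than claimed. Otherwise $c\mapsto c_1$ is a nonzero functional on $C$, so the subcode of $C$ vanishing on coordinate $1$ has dimension $k-1$, and hence $\dim C_1=k-1$. In this case choose a generator matrix of $C$ of block form
\[
G=\begin{pmatrix} 1 & a\\ 0 & G_1\end{pmatrix},
\]
with $a\in\F_2^{n-1}$ and $G_1$ a $(k-1)\times(n-1)$ generator matrix of $C_1$: take for the first row any codeword with a $1$ in position $1$, and for the remaining rows a basis of the dimension-$(k-1)$ subcode vanishing on position $1$; these $k$ rows clearly span $C$, and dropping the (zero) first coordinate of the last $k-1$ of them gives $G_1$.

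A direct multiplication gives
\[
GG^T=\begin{pmatrix} 1+aa^T & aG_1^T\\ G_1a^T & G_1G_1^T\end{pmatrix},
\]
so $G_1G_1^T$ is exactly the matrix obtained from $GG^T$ by deleting its first row and its first column. Removing one row from a matrix drops the rank by at most $1$, and likewise for a column, so $\mathrm{rank}(G_1G_1^T)\ge \mathrm{rank}(GG^T)-2=(k-s)-2$. Feeding this into the Gram-matrix identity for $C_1$ yields
\[
\dim(C_1\cap C_1^\perp)=(k-1)-\mathrm{rank}(G_1G_1^T)\le (k-1)-(k-s-2)=s+1,
\]
as required.

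I do not expect a genuine obstacle here; the only points needing a little care are the case split according to whether coordinate $i$ is a zero coordinate of $C$ (which determines whether $\dim C_i$ equals $k$ or $k-1$) and the elementary observation that passing from a symmetric matrix to the principal submatrix obtained by deleting a single index drops the rank by at most $2$.
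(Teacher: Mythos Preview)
Your proof is correct and takes a genuinely different route from the paper. The paper argues combinatorially at the level of codewords: it writes the hull $\mathcal{H}=C\cap C^\perp$ as $(0\mid\mathcal{H}')$ or $(0\mid\mathcal{H}')\cup(1\mid v+\mathcal{H}')$, observes that $\mathcal{H}'\subseteq\mathcal{H}_1=C_1\cap C_1^\perp$, and then shows that any two elements of $\mathcal{H}_1\setminus\mathcal{H}'$ differ by an element of $\mathcal{H}'$, forcing $\dim\mathcal{H}_1\le\dim\mathcal{H}'+1\le s+1$. Your argument instead linearises everything through the Gram matrix identity $\dim(C\cap C^\perp)=k-\mathrm{rank}(GG^T)$ and the trivial fact that deleting a row and a column drops rank by at most $2$. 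Your approach is shorter and requires no case analysis on the structure of $\mathcal{H}$; the paper's approach, on the other hand, yields slightly finer information (it pins down exactly when $\dim\mathcal{H}_1$ equals $s-1$, $s$, or $s+1$ in terms of how the deleted coordinate interacts with $\mathcal{H}$), which is not needed for the stated proposition but could be useful elsewhere.
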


\begin{proof} Without loss of generality, we can consider the shortened code $C_1$ of $C$ on the first coordinate. Then the dual code of $C_1$ is the punctured code of $C^\perp$ on the same coordinate. If all codewords have 0 as a first coordinate then the shortened and the punctured codes on this coordinate coincide, otherwise $C=(0\vert C_1)\cup (1\vert x+C_1)$ for a codeword $(1,x)\in C$. Let $\mathcal{H}=C\cap C^\perp$ and  $\mathcal{H}_1=C_1\cap C_1^\perp$. There are two possibilities for $\mathcal{H}$, namely $\mathcal{H}=(0\vert \mathcal{H}')$ or $\mathcal{H}=(0\vert \mathcal{H}')\cup (1\vert v+\mathcal{H}')$. In both cases $\mathcal{H}'\subseteq \mathcal{H}_0$. If $\mathcal{H}'= \mathcal{H}_1$ then $\dim \mathcal{H}_1=\dim \mathcal{H}$ or $\dim \mathcal{H}-1$.

Let now $\mathcal{H}'\not\equiv \mathcal{H}_1$.
 Take $y_1,y_2\in \mathcal{H}_1\setminus \mathcal{H}'$. It follows that $(0,y_i)\in C$ and $(1,y_i)\in C^\perp$ for $i=1,2$. Hence $(0,y_1+y_2)\in \mathcal{H}$ and so $y_1+y_2\in \mathcal{H}'$. It turns out that $\mathcal{H}_1=\mathcal{H}'\cup (y_1+\mathcal{H}')$ and $\dim \mathcal{H}_1=\dim \mathcal{H}'+1$. Since $\dim \mathcal{H}'=\dim \mathcal{H}$ or $\dim \mathcal{H}-1$, we have $\dim \mathcal{H}_1=\dim \mathcal{H}$ or $\dim \mathcal{H}+1$.
\end{proof}

During the generation process, in the $m$-th step, the program \textsc{Generation} constructs all linear $[n-k+m+1,m+1]$ codes that have the already constructed $[n-k+m,m]$ codes as shortened codes. To obtain LCD codes, we are looking for the dimension of the hulls of the codes in any step. For example, if we would like to obtain all $[9,4,4]$ LCD codes, we construct during the generation process all $[8,3,4]$ codes whose hulls have dimensions $\le 1$, $[7,2,4]$ codes with $\dim \mathcal{H}\le 2$ and $[6,1,\ge 4]$ codes with $\dim \mathcal{H}\le 3$.


To obtain the exact values of $d_{LCD}(n,k)$, we use the following inequalities:
\begin{equation}\label{eq1}
d_{LCD}(n-1,k)\le d_{LCD}(n,k)\le d(n,k).
\end{equation}
\begin{equation}\label{eq2}
\mbox{If} \ k \ \mbox{is odd then} \ d_{LCD}(n,k)\le d_{LCD}(n-1,k-1).
\end{equation}
\begin{equation}\label{eq3}
\mbox{If} \ d_{LCD}(n-2,k) \ \mbox{is odd then} \ d_{LCD}(n-2,k)+1\le d_{LCD}(n,k)\le d(n,k).
\end{equation}
If $k$ is even and $d_{LCD}(n-1,k)$ is odd then
\begin{equation}\label{eq4}
d_{LCD}(n-1,k)+1\le d_{LCD}(n,k)\le d(n,k).
\end{equation}

In the cases when these inequalities are not enough, we use the program \textsc{Generation}.

\subsection{Length 25}

As the values of $d_{LCD}(25,k)$ are known for $k\le 4$ and $k\ge 21$, we consider the dimensions $k\in\{5,\ldots,20\}$.

According to (\ref{eq1}), $d_{LCD}(25,9)=d_{LCD}(25,10)=8$, $d_{LCD}(25,13)=6$, and $d_{LCD}(25,16)=d_{LCD}(25,17)=d_{LCD}(25,18)=4$. As $d_{LCD}(23,7)=9$ and $d(25,7)=10$ then $d_{LCD}(25,7)=10$. According to (\ref{eq4}) $d_{LCD}(25,14)=6$, since $d_{LCD}(24,14)=5$ and $d(25,14)=6$.

Using the program \textsc{Generation}, we proved that no LCD $[25,5,12]$ code exists but there are 122 inequivalent LCD $[25,5,11]$ codes with dual distance $d^\perp\ge 2$. This result gives us that $d_{LCD}(25,5)=11$. Two of the constructed codes have dual distance 3 which means that $d_{LCD}(25,20)=d(25,20)=3$. Using the same program we prove that any LCD $[25,20,3]$ code is equivalent to one of these two codes. During the generation process in the case $n=25$, $k=5$, $d=12$, we obtain the following information:
\begin{itemize}
\item there are 24 $[22,2,\ge 12]$ codes $C_2$ such that $\dim(C_2\cap C_2^\perp)\le 3$;
\item there are 13 $[23,3,\ge 12]$ codes $C_3$ such that $\dim(C_3\cap C_3^\perp)\le 2$;
\item there are two $[24,4,\ge 12]$ codes $C_4$ such that $\dim(C_4\cap C_4^\perp)\le 1$ - one of them is the only LCD $[24,4,12]$ code and the other one is the only $[23,4,12]$ code extended with a zero coordinate (it is not LCD).
\end{itemize}

The program \textsc{Generation} shows that no LCD $[25,6,11]$, $[25,11,8]$ and $[25,19,4]$ codes exist. Combining this result with (\ref{eq1}) we obtain $d_{LCD}(25,6)=10$, $d_{LCD}(25,11)=7$ and $d_{LCD}(25,19)=3$. Applying Theorem \ref{thm:k-1} we have $d_{LCD}(25,12)\le 7$. There are four LCD $[25,12,7]$ codes. All these codes have dual distance 6.

In the cases $k=8$ and $k=15$ we have constructed more than 100000 LCD $[25,8,9]$ and more than 165000 LCD $[25,15,5]$ codes.

\subsection{Length 26}

The values of $d_{LCD}(26,k)$ are known for $k\le 5$ and $k\ge 22$. Using the inequalities (\ref{eq1})-(\ref{eq4}), we have $d_{LCD}(26,7)=d_{LCD}(26,8)=10$, $d_{LCD}(26,10)=d_{LCD}(26,11)=d_{LCD}(26,12)=8$, $d_{LCD}(26,14)=6$, and $d_{LCD}(26,k)=4$ for $k=17,18,19,20$.

Using the program \textsc{Generation}, we proved that no LCD $[26,6,12]$ exists but there are 221 inequivalent LCD $[26,6,11]$ codes. There are more than 4000 LCD $[26,9,9]$ and more than 2000 LCD $[26,16,5]$ codes. There is only one LCD $[26,21,3]$ code and its dual distance is 11.

Consider the case $k=13$ in more detail. Gulliver and \"Osterg\aa rd have determined that there are three optimal inequivalent [26,13,7] codes, $C_{26,1}$, $C_{26,2}$, and $C_{26,3}$ (see \cite{GO} for their generator matrices and weight enumerators). Exactly two of them, namely $C_{26,2}$ and $C_{26,3}$, are LCD codes. Their dual distances are equal to $7$. We list here only their weight enumerators.
\begin{align*}
  W_{26,1}= & ~1+117z^{7}+273z^{8}+338z^{9}+598z^{10}+923z^{11}+1105z^{12}+1340z^{13} \\
   & +1300z^{14}+923z^{15}+598z^{16}+338z^{17}+182z^{18}+117z^{19}+39z^{20},\\
  W_{26,3}=&~1+115z^{7}+275z^{8}+350z^{9}+586z^{10}+893z^{11}+1135z^{12}+1380z^{13}\\
  &+1260z^{14}+893z^{15}+628z^{16}+350z^{17}+170z^{18}+115z^{19}+41z^{20}.
\end{align*}

The two codes are not orthogonal to each other, therefore both are isodual. This shows that there are linear codes which are both LCD and isodual.

Only the case $k=15$ remains unsolved. We failed to construct or prove the nonexistence of LCD $[26,15,6]$ codes and so $d_{LCD}(26,15)=5$ or 6.

\subsection{Length 27}

The values of $d_{LCD}(27,k)$ are known for $k\le 4$ and $k\ge 23$. From the inequalities (\ref{eq1})-(\ref{eq4}) we obtain $d_{LCD}(27,5)=d_{LCD}(27,6)=12$, $d_{LCD}(27,8)=10$, $d_{LCD}(27,11)=d_{LCD}(27,12)=8$, $d_{LCD}(27,15)=d_{LCD}(27,16)=6$, and $d_{LCD}(27,18)=d_{LCD}(27,19)=d_{LCD}(27,20)=4$.

Using \textsc{Generation}, we proved the nonexistence of $[27,9,10]$, $[27,13,8]$, $[27,21,4]$ and $[27,22,3]$ LCD codes. This gives us that  $d_{LCD}(27,9)=9$, $d_{LCD}(27,13)=7$, $d_{LCD}(27,21)=3$ and $d_{LCD}(27,22)=2$.

We obtained exactly 33 $[27,7,11]$ LCD codes. Moreover, we constructed four $[27,10,9]$ LCD codes, and two of them have dual distance 5 which proves that $d_{LCD}(27,17)=5$ as $d(27,17)=5$ \cite{Grassl-table}.  According to (\ref{eq4}), if $d_{LCD}(27,14)=7$ then $d_{LCD}(28,14)=8$ but the program \textsc{Generation} completed the case for the LCD $[28,14,8]$ codes with the result that such LCD codes do not exist. Therefore $d_{LCD}(27,14)=6$.

\subsection{Length 28}

The values of $d_{LCD}(26,k)$ are known for $k\le 4$ and $k\ge 24$. Using the inequalities (\ref{eq1})-(\ref{eq4}), we have $d_{LCD}(28,6)=12$, $d_{LCD}(28,9)=d_{LCD}(28,10)=10$, $d_{LCD}(28,11)=d_{LCD}(28,12)=d_{LCD}(28,13)=8$,
$d_{LCD}(28,15)=d_{LCD}(28$, $16)=6$, and $d_{LCD}(28,k)=4$ for $k=19,20,21$.

Using the program \textsc{Generation}, we proved that no $[28,7,12]$, $[28,8,11]$, $[28,14,8]$ and $[28,22,4]$ LCD codes exist, and $d_{LCD}(28,7)=11$, $d_{LCD}(28,8)=10$, $d_{LCD}(28,14)=7$, and $d_{LCD}(28,22)=3$. We obtained that there are two LCD $[28,5,13]$ and five $[28,10,10]$ codes. Four of the codes of dimension 10 have dual distance 5, hence $d_{LCD}(28,18)=5$.

Instead of using the program \textsc{Generation}, we prove a proposition which is important for the larger lengths, too.

\begin{proposition}\label{prop:dd2}
If $n\ge 4$, $k\le n-2$, and $d_{LCD}(n,k)=2$, then $d_{LCD}(n+i,k+i)=2$ for all integers $i\ge 1$.
\end{proposition}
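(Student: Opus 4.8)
The plan is to establish the case $i=1$ and then iterate, because the hypotheses are inherited: $n\ge 4$ gives $n+1\ge 4$, and $k\le n-2$ gives $k+1\le (n+1)-2$. So it suffices to prove that if $d_{LCD}(n,k)=2$ with $n\ge 4$ and $k\le n-2$, then $d_{LCD}(n+1,k+1)=2$; I split this into the lower bound $d_{LCD}(n+1,k+1)\ge 2$ (a construction) and the upper bound $d_{LCD}(n+1,k+1)\le 2$, and then apply the step to $(n+1,k+1)$, $(n+2,k+2),\dots$.

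\emph{Lower bound.} Let $C$ be an LCD $[n,k,2]$ code with generator matrix $G$. Since $C$ is LCD, $C\cap C^\perp=\{0\}$; and since $\dim C^\perp=n-k\ge 2$, the subcode of even-weight vectors of $C^\perp$ has dimension at least $n-k-1\ge 1$, so it contains a vector $u\neq 0$ with $\wt(u)$ even. Form the $(k+1)\times(n+1)$ matrix $G'$ by appending a zero coordinate to every row of $G$ and adjoining the extra row $(u\mid 1)$; because of its last entry this new row is independent of the others, so $\langle G'\rangle$ is an $[n+1,k+1]$ code. Its nonzero codewords are $(c\mid 0)$ with $c\in C\setminus\{0\}$, of weight $\ge 2$, and $(c+u\mid 1)$ with $c\in C$; since $u\in C^\perp\setminus\{0\}$ cannot lie in $C$, each $c+u$ is nonzero, so these codewords also have weight $\ge 2$, and $d(\langle G'\rangle)\ge 2$. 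Finally, the rows of $G$ lie in $C$ while $u\in C^\perp$, so all mixed inner products vanish, and $(u\mid 1)\cdot(u\mid 1)=\wt(u)+1\equiv 1\pmod 2$; thus $G'G'^{T}=\mathrm{diag}(GG^{T},1)$, which is nonsingular over $\F_2$, and by Theorem~\ref{thm:Massey} $\langle G'\rangle$ is LCD. Hence $d_{LCD}(n+1,k+1)\ge 2$.

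\emph{Upper bound.} If $k$ is even, then $k+1$ is odd and Corollary~\ref{cor:odd1}, applied with length $n+1$ and dimension $k+1$, gives $d_{LCD}(n+1,k+1)\le d_{LCD}(n,k)=2$. Suppose instead $k$ is odd and, for contradiction, $d_{LCD}(n+1,k+1)\ge 3$. Then $k+1$ is even and $d_{LCD}(n+1,k+1)>2=d_{LCD}(n,k)=d_{LCD}\big((n+1)-1,(k+1)-1\big)$, so Corollary~\ref{cor:odd2} forces every optimal LCD $[n+1,k+1]$ code $D$ to be even-like; its minimum weight is then even and $\ge 3$, hence $\ge 4$. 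By Proposition~\ref{prop:punctured} the punctured code $D^1$ is again LCD; by Lemma~\ref{shortened-punctured}(ii) it has dimension $k+1$ (as $d(D)>1$), and $d(D^1)\ge d(D)-1\ge 3$. Therefore $d_{LCD}(n,k+1)\ge 3$, and Theorem~\ref{thm:k-1} gives $d_{LCD}(n,k)\ge d_{LCD}(n,k+1)\ge 3$, contradicting $d_{LCD}(n,k)=2$. So in all cases $d_{LCD}(n+1,k+1)\le 2$, which together with the lower bound yields $d_{LCD}(n+1,k+1)=2$; induction on $i$ completes the proof.

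\emph{Where the difficulty lies.} The determinant of the block matrix and the dimension count for the punctured code are routine. The genuine obstacle is the upper bound when $k$ is odd: there is no one-step inequality of the shape $d_{LCD}(n+1,k+1)\le d_{LCD}(n,k)$ available, since Corollary~\ref{cor:odd1} requires the \emph{larger} dimension to be odd. One therefore has to detour through Corollary~\ref{cor:odd2} to make the optimal codes even, and then spend the parity gap — the jump of the minimum weight from $\ge 3$ to $\ge 4$ — together with puncturing and Theorem~\ref{thm:k-1} to drop both the length and the dimension while keeping the minimum-distance bound $d\ge 3$ intact. Arranging these reductions to land precisely on $(n,k)$ is the step demanding care.
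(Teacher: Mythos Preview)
Your proof is correct. The upper-bound half is exactly the paper's argument, only with the parity cases for $k$ written out explicitly: the paper compresses your two cases into the single line ``According to Corollaries~\ref{cor:odd1} and~\ref{cor:odd2}, $k+1$ must be even and $C$ must be even-like, so $d\ge 4$,'' then punctures (Proposition~\ref{prop:punctured}) and invokes Theorem~\ref{thm:k-1} just as you do.

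Where you genuinely diverge is the lower bound. The paper does not build a code; it appeals to the existence results Corollaries~\ref{cor:even} and~\ref{cor:odd} to get an LCD $[n+1,n-k]$ code with dual distance $\ge 2$, then dualises to conclude that an LCD $[n+1,k+1,\ge 2]$ code exists. Your route is instead an explicit one-step extension: append a zero coordinate to an optimal LCD $[n,k,2]$ code and adjoin the row $(u\mid 1)$ with $u\in C^{\perp}$ nonzero of even weight, checking block-diagonality of $G'G'^{T}$ and $d\ge 2$ directly. Your construction is more self-contained (it needs nothing beyond Theorem~\ref{thm:Massey} and the fact that $n-k\ge 2$ guarantees a nonzero even-weight vector in $C^{\perp}$), while the paper's argument reuses already-established existence machinery; both are short and valid.
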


\begin{proof}
Suppose that $d=d_{LCD}(n+1,k+1)>2$ and $C$ is an LCD $[n+1,k+1,d]$ code. According to Corollaries \ref{cor:odd1} and \ref{cor:odd2}, $k+1$ must be even and $C$ must be an even-like code, so $d\ge 4$. According to Proposition \ref{prop:punctured} the punctured code of $C$ on any coordinate is an LCD $[n,k+1,d']$ with $d'\ge d-1\ge 3$. Since $d_{LCD}(n,k+1)\le d_{LCD}(n,k)=2$ (see Theorem \ref{thm:k-1}), we have a contradiction. Hence $d\le 2$.

According to Corollaries \ref{cor:even} and \ref{cor:odd} there is an LCD $[n+1,n-k]$ code with  $d^\perp\ge 2$. This means that an LCD $[n+1,k+1,\ge 2]$ code exists. Hence $d=d_{LCD}(n+1,k+1)=2$.
\end{proof}

Only the case $k=17$ remains unsolved. We failed to construct or prove the nonexistence of LCD $[28,17,6]$ codes and so $d_{LCD}(28,17)=5$ or 6.

\subsection{Length 29}

The values of $d_{LCD}(27,k)$ are known for $k\le 4$ and $k\ge 25$. For the other values of $k$, we calculated $d_{LCD}(29,k)$ using the inequalities and the program \textsc{Generation}. Only the exact value of $d_{LCD}(29,11)$ remains unknown.

\subsection{Length 30}

In this case $d_{LCD}(30,1)=29$, $d_{LCD}(30,2)=19$, $d_{LCD}(30,3)=16$, $d_{LCD}(30,4)$ $=14$,\\  $d_{LCD}(30,26)=d_{LCD}(30,27)=d_{LCD}(30,28)=2$, and $d_{LCD}(30,29)=d_{LCD}(30,30)=1$. From the inequalities given above, we have $d_{LCD}(30,7)=d_{LCD}(30,8)=12$, $d_{LCD}(30,13)=d_{LCD}(30,14)=8$, $d_{LCD}(30,17)=d_{LCD}(30,18)=6$, $d_{LCD}(30,21)=d_{LCD}(30,22)=4$.

According to Corollary \ref{cor:odd1}, $d_{LCD}(30,5)\le 14$ and $d_{LCD}(30,9)\le 11$. Using the program \textsc{Generation}, we obtained exactly two inequivalent $[30,5,14]$ LCD codes and a few $[30,9,11]$ LCD codes. Moreover, we proved the nonexistence of LCD $[30,6,14]$, $[30,10,11]$ and $[30,24,4]$ codes, and construct LCD $[30,6,13]$, $[30,10,10]$, $[30,20,5]$, $[30,23,4]$ and $[30,24,3]$ codes.

For some of the remaining cases for $k$, namely $k\in\{ 11,12, 15,16,19\}$ we have:
\begin{itemize}
\item[$k=15$)] Suppose that $C$ is an LCD $[30,15,8]$ code. According to Theorem \ref{thm:even}, $C$ and $C^\perp$ are odd-like codes. Hence the all-ones vector is not a codeword of $C$ and according to Proposition \ref{prop:odd-like}, there is a coordinate $i$ such that the punctured code of $C$ on this coordinate is a $[29,15,7$ or $8]$ LCD code. As $d_{LCD}(29,15)=6$, this is not possible. It follows that $d_{LCD}(30,15)\le 7$.
\item[$k=16$)] If $C$ is an LCD $[30,16,7]$ code then it is odd-like and according to Proposition \ref{prop:odd-like} at least one of the shortened codes of $C$ of length $n-1$ is LCD. Hence there is an LCD $[29,15,\ge 7]$ code which contradicts to $d_{LCD}(29,15)=6$. Hence $d_{LCD}(30,16)=6$.
\end{itemize}

\subsection {Optimal LCD Codes of Lengths Greater Than 30}

A summary of the results given in this and the previous subsections
is presented in Table \ref{table1} and Table \ref{table2}. For lengths $n \leq 40$ and
dimensions $5 \leq k \leq 32$, the maximal minimum distance of corresponding codes is shown.

We constructed even-like $[33,10,12]$, $[33,12,10]$, $[37,12,12]$, $[35,14,10]$  and $[37,16,10]$ LCD codes. These constructions prove that $d_{LCD}(33,12)\ge 10$, $d_{LCD}(37,12)\ge 12$, $d_{LCD}(35,14)$ $=10$ and $d_{LCD}(37,16)=10$. According to Proposition \ref{prop:punctured}, there are $[32,12,9]$, $[36,12,11]$, $[34,14,9]$  and $[36,16,9]$ LCD codes and so $d_{LCD}(32,12)\ge 9$, $d_{LCD}(36,12)\ge 11$, $d_{LCD}(34,14)\ge 9$, and $d_{LCD}(36,16)\ge 9$. Moreover, $d_{LCD}(33,13)\ge 9$, $d_{LCD}(34,13)\ge 9$, $d_{LCD}(35,15)$ $\ge 9$, $d_{LCD}(36,15)\ge 9$. Puncturing the constructed 1304 even-like $[37,16,10]$ LCD codes on the last coordinate, we obtain the same number of odd-like $[36,16,9]$ LCD codes. Neither of them contains the all-ones vector and therefore some of the punctured $[35,16,8$ or $9]$ and some of the shortened $[35,15,\ge 9]$ codes are LCD. Hence $d_{LCD}(35,16)\ge 8$ and $d_{LCD}(35,15)\ge 9$. Similarly we construct odd-like $[34,15,8]$ and $[34,16,7]$ LCD codes that are not self-complementary.

From the constructed two odd-like LCD $[35,11,12]$ codes we obtain that $d_{LCD}(35,11)=12$, $d_{LCD}(34,11)\ge 11$, $d_{LCD}(34,12)\ge 11$, $d_{LCD}(35,12)\ge 11$, and $d_{LCD}(36,11)\ge 12$. If $d_{LCD}(34,12)=12$ then according to Corollary \ref{cor:odd2}, all LCD $[34,12,12]$ codes are even and so their $[33,12,\ge 11]$ punctured codes are LCD, which contradicts $d_{LCD}(33,12)=10$. Hence $d_{LCD}(34,12)=11$, $d_{LCD}(35,12)=12$, and $d_{LCD}(36,12)=12$.

Moreover, we proved that odd-like $[31,9,12]$ and even-like $[32,10,12]$ LCD codes do not exist. This result gives us that $d_{LCD}(31,9)=d_{LCD}(32,10)=11$ (according to the presented above bounds). Using Corollaries \ref{cor:k-even-d-odd} and \ref{cor:odd1}, we have $d_{LCD}(31,10)=d_{LCD}(32,11)=d_{LCD}(33,12)=10$.
%
\begin{table}
\caption{Bounds on the minimum distance of binary LCD codes\label{table1}}
{\small
\begin{tabular}{r|c|c|c|c|c|c|c|c|c|c|c|c|c }
\hline\noalign{\smallskip}
$n\backslash k$ & 5 &6 &7 & 8 &9 &10 &11 &12&13 &14 &15 &16  &17 \\
\noalign{\smallskip}\hline\noalign{\smallskip}
16 &6   &6*  &5   &5*    & 4* & 4*   & 3    & 2*   & 2*& 2*  & 1   & 1    &           \\
17 &7   &6   &6*  &6*    & 5* & 4*   & 3   & 3*   & 2*& 2*  & 2*  & 2*   & 1         \\
18 &7   &7   &6   &6*    & 5  & 4*   & 4*   & 4*   &3* & 2*  & 2*  & 2*   & 1         \\
19 &8*  &8*  &7   &6     & 6* & 5*   & 4*   & 4*   & 3 & 3*  & 2*  &2*    & 2*        \\
20 &9*  &8*  &7   &6     & 6  & 6*   & 5*   & 4*   & 4*& 4*  & 3*  &2*    & 2*       \\ \hline
21 &9   &8*  &8*  &7     & 6  & 6    & 5    & 5*   & 4*& 4*  & 3   &3*    & 2*     \\
22 &10* &9*  &8*  &8*    & 7  & 6    & 6    & 6*   & 5*& 4*  & 4*  &4*    & 3*       \\
23 &10  &10* &9*  &8*    & 7  & 7    & 6    & 6    & 5 & 4   & 4*  &4*    & 3      \\
24 &11  &10* &9   &8*    & 8* & 8*   & 7    & 6    & 6*& 5   & 4*  &4*    & 4*      \\
25 &11  &10  &10* &9*    & 8* & 8*   & 7    & 7    & 6*& 6*  & 5*  &4*    & 4*     \\
26 &12* &11  &10  &10*   & 9* & 8*   & 8*   & 8*   & 7*& 6*  & 5--6& 5*   & 4*    \\
27 &12  &12* &11  &10*   & 9  & 9*   & 8*   & 8*   & 7 & 6   & 6*  & 6*   & 5*      \\
28 &13  &12* &11  &10    & 10*& 10*  & 8*   & 8*   &8* & 7   & 6*  & 6*   & 5--6        \\
29 &13  &12  &12* &11    & 10 & 10*  & 8--9 & 8*   & 8*& 8*  & 6   & 6*   & 6*         \\
30 &14  &13  &12* &12*   & 11 & 10   & 8--10& 8--9 & 8*& 8*  & 6--7& 6    & 6*        \\
\hline
31 &14  &14  &13* &12* &11      &10     &9--10  &8--10 &8--9  & 8*   &6--8  &6--7 &6          \\
32 &15  &14  &13  &12  &12*     &11     &10     &9--10 &8--10 &8--9  &6--8  &6--8 &6--7         \\
33 &15  &14  &14* &13  &12*     &12*    &10--11 &10    &9--10 &8--10 &7--9  &6--8 &6--8         \\
34 &16* &15  &14  &14* &12--13  &12*    &11--12 &11    &9--10 &9--10 &8--10 &7--9&6--8         \\
35 &16* &16* &15  &14  &12--14  &12--13 &12*    &12*   &10--11 &10*   &9--10 &8--10&6--8     \\
36 &17* &16* &15  &14  &12--14  &12--14 &12--13 &12*   &10--12 &10--11 &9--10 &9--10&6--9     \\
37 &17  &16  &16* &15  &12--14  &12--14 &12--14 &12--13 &10--12 &10--12 &10--11 &10*&6--10        \\
38 &18* &17  &16* &16* &12--15  &12--14 &12--14 &12--14 &10--12 &10--12 &10--12 &10--11&6--10        \\
39 &18  &18* &17* &16* &12--16  &12--15 &12--14 &12--14 &10--13 &10--12 &10--12 &10--12&6--11   \\
40 &19  &18* &17  &16* &12--16  &12--16 &12--15 &12--14 &10--14 &10--13 &10--12 &10--12&6--12    \\
\noalign{\smallskip}\hline
\end{tabular}}
\end{table}

\begin{table}
\caption{Bounds on the minimum distance of binary LCD codes\label{table2}}
{\small
\begin{tabular}{r|c|c|c|c|c|c|c|c|c|c|c|c|c|c|c}
\hline\noalign{\smallskip}
$n\backslash k$
   & 18 &19   &20  & 21    &22  &23    &24    &25  &26 &27 &28 &29  &30 &31& 32  \\
   \noalign{\smallskip}\hline\noalign{\smallskip}
23 &3*  &2*   &2*  &2*    & 2* & 1    &     &     &  &    &   &  &     & &    \\
24 &4*  &3*   &2*  &2*    & 2* & 1    & 1   &     &  &    &   &  &     & &     \\
25 &4*  &3    &3*  &2*    & 2* & 2*   & 2*  & 1   &  &    &   &  &     & &    \\
26 &4*  &4*   &4*  &3*    & 2* & 2*   & 2*  & 1   & 1&    &   &  &     & &     \\
27 &4*  &4*   &4*  &3     & 2  & 2*   & 2*  & 2*  &2*& 1  &   &  &     & &     \\
28 &5*  &4*   &4*  &4*    & 3  & 2    & 2*  & 2*  &2*& 1  & 1 &  &     & &      \\
29 &6*  &5*   &4*  &4*    & 4* & 3    & 2   & 2*  &2*& 2* & 2*& 1&     & &      \\
30 &6*  &5--6 &5*  &4*    & 4* & 4*   & 3   & 2   &2*& 2* & 2*& 1& 1   & &     \\
\hline
31 &6*   &6*   &6*   &4--5 &4*   &4*   &4*   &3    &2    & 2* &2*  &2*  &2* & 1  &       \\
32 &6    &6*   &6*   &4--6 &4--5 &4*   &4*   &3--4 &3    &2*  &2*  &2*  &2* & 1  &1     \\
33 &6--7 &6    &6*   &4--6 &4--6 &4--5 &4*   &4*   &4*   &3*  &2*  &2*  &2* & 2* &2*     \\
34 &6--8 &6--7 &6    &4--6 &4--6 &4--6 &4*   &4*   &4*   &3--4&3*  &2*  &2* & 2* & 2*      \\
35 &6--8 &6--8 &6--7 &4--6 &4--6 &4--6 &4--5 &4*   &4*   &4*  &4*  &3*  &2* & 2* & 2*    \\
36 &6--8 &6--8 &6--8 &4--7 &4--6 &4--6 &4--6 &4--5 &4*   &4*  &4*  &3--4&3* & 2* & 2*    \\
37 &6--9 &6--8 &6--8 &4--8 &4--7 &4--6 &4--6 &4--6 &4--5 &4*  &4*  &4*  &4* & 3*   & 2*     \\
38 &6--10&6--9 &6--8 &4--8 &4--8 &4--7 &4--6 &4--6 &4--6 &4--5&4*  &4*  &4* & 3--4   & 3*      \\
39 &6--10&6--10&6--9 &4--8 &4--8 &4--8 &4--7 &4--6 &4--6 &4--6&4--5&4*  &4* & 4*   & 4*  \\
40 &6--11&6--10&6--10&4--9 &4--8 &4--8 &4--8 &4--7 &4--6 &4--6&4--6&4--5&4* & 4*   & 4*   \\
\noalign{\smallskip}\hline
\end{tabular}}
\end{table}

\section{Some classification results of optimal LCD codes}
\label{sect:classification}

Using the program \textsc{Generation}, we have classified optimal binary LCD codes for some values of $n$ and $k$. The results are presented in Table \ref{tableso}. Any column consists of two subcolumns. We present the value of $d_{LCD}(n,k)$ for the corresponding length $n$ and dimension $k$ in the first subcolumn, and the number of inequivalent LCD $[n,k,d_{LCD}(n,k)]$ with $d^\perp\ge 2$ in the second one.

We have classified all optimal LCD $[n,k,d]$ codes of dimensions 4 and 5 for $17\le n\le 30$. For dimension 6, only the case $n=29$ is not completed, since the $[29,6,12]$ LCD codes are more than a million. There are more gaps in the table for codes with larger dimensions since the number of inequivalent LCD codes with corresponding parameters increases dramatically.

Looking at the results in Table \ref{tableso}, we made the observation that if $d_{LCD}(n,k)$ is even and $d_{LCD}(n-1,k)=d_{LCD}(n,k)-1$ for an even positive integer $k$, then all LCD $[n,k,d_{LCD}(n,k)]$ are even-like codes. We are not sure if this is true in all cases but we present it as a conjecture.

\begin{conjecture} Let $k$ be an even positive integer and $n>k$ be another integer. If $d_{LCD}(n,k)$ is even and $d_{LCD}(n-1,k)=d_{LCD}(n,k)-1$, then all LCD $[n,k,d_{LCD}(n,k)]$ are even-like codes.
\end{conjecture}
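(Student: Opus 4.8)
The plan is to argue by contradiction: suppose $k$ is even, $d := d_{LCD}(n,k)$ is even, $d_{LCD}(n-1,k) = d-1$, yet there exists an odd-like LCD $[n,k,d]$ code $C$. I want to produce from $C$ an LCD $[n-1,k,d']$ code with $d' \ge d$, contradicting $d_{LCD}(n-1,k) = d-1 < d$. The natural tool is Proposition~\ref{prop:odd-like}, which tells us that for an odd-like LCD code either (a) $C$ contains the all-ones vector $\mathbf{1}$, in which case every shortened code on one coordinate is LCD, or (b) $C$ does not contain $\mathbf{1}$, in which case there is a coordinate $j$ whose punctured code $C^j$ is an LCD $[n-1,k,\ge d-1]$ code. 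Case (b) by itself only gives $d' \ge d-1$, which is not enough, so the real work is to squeeze out the extra unit.

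First I would dispose of case (a): if $\mathbf{1} \in C$, then $C^\perp$ is even-like, and by Corollary~\ref{cor:odd1} (applied in the dual direction) or directly by Theorem~\ref{thm:k-1} combined with the structure of shortened codes we get an LCD code of length $n-1$ and dimension $k$ with minimum distance $\ge d$ — actually the cleanest route here is: shortening $C$ on a coordinate where $\mathbf 1$ helps drops dimension, so instead I would puncture. Since $\mathbf 1 \in C$ means $C$ is self-complementary, for any coordinate $j$ the punctured code $C^j$ has the property that the two cosets of $C_j$ inside $C^j$ are complements, forcing the weight distribution so that $C^j$ still has minimum distance $\ge d-1$; but a parity/counting argument (a minimum-weight codeword of $C$ and its complement have weights $d$ and $n-d$, and one of them loses a coordinate) should show $C^j$ actually retains a codeword of weight $\ge d$ unless $d-1$ is forced. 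I expect case (a) to actually be handled by observing $\mathbf 1\in C$ contradicts some earlier constraint, or to reduce to case (b) after a coordinate permutation.

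The heart of the proof is case (b). Here I would use the basis $c_1,\dots,c_k$ from Theorem~\ref{thm:odd} with $c_i\cdot c_j = \delta_{ij}$, and the column-count analysis inside the proof of Proposition~\ref{prop:odd-like}: writing $s_i$ for the number of $1$'s in column $i$ of the generator matrix (restricted to the first $k$ rows / a convenient basis), puncturing on column $i$ yields an LCD code exactly when $s_i$ is even, and the minimum distance drops by at most one, dropping by exactly one only if some minimum-weight codeword has a $1$ in column $i$. The idea is a double count: if \emph{every} even-weight-column puncturing killed a unit of distance, then every minimum-weight codeword of $C$ would meet every "even column" — combined with $\mathbf 1 \notin C$ and the fact (from the proof of Prop.~\ref{prop:odd-like}) that not all $s_i$ can have the same parity, one forces a minimum-weight codeword supported off some admissible column, giving $C^i$ of distance $\ge d$. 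Then $d_{LCD}(n-1,k)\ge d$, the contradiction.

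The main obstacle is exactly this last step: converting "the punctured code is LCD" into "the punctured code still has distance $d$" for a \emph{suitably chosen} coordinate, simultaneously with the LCD (even column-sum) constraint. The two requirements — a column with even weight \emph{and} avoided by some minimum-weight word — may a priori conflict, and handling that tension, perhaps by also invoking Corollary~\ref{cor:odd2} (which already says that when $d_{LCD}(n,k) > d_{LCD}(n-1,k-1)$ all optimal LCD codes are even-like) to reduce to the case $d_{LCD}(n,k) = d_{LCD}(n-1,k-1)$, or by a more careful weight-enumerator inequality, is where the argument will need its real idea. If the clean combinatorial argument fails, the fallback is to show that an odd-like optimal $C$ together with Proposition~\ref{prop:n+1}(ii) would give an odd-like LCD $[n+1,k,d$ or $d+1]$ code and then play the length-$n$ vs length-$(n-1)$ bounds against Corollary~\ref{cor:2} and~(\ref{eq4}) to reach a numerical contradiction.
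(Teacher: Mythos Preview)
The statement you are trying to prove is explicitly presented in the paper as a \emph{conjecture}, not a theorem: the author writes ``We are not sure if this is true in all cases but we present it as a conjecture.'' There is no proof in the paper to compare against, so the appropriate question is simply whether your argument succeeds on its own.

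It does not, and in fact the difficulty is sharper than you indicate. In case~(a), where $\mathbf{1}\in C$, your treatment is hand-waving, but the situation is worse than a missing detail: by Proposition~\ref{prop:odd-like} (via the even-like dual and Proposition~\ref{prop:punctured}) \emph{every} shortened code $C_i$ is LCD, and then Lemma~\ref{lemma:p-s} forces \emph{no} punctured code $C^i$ to be LCD. So in this case there is simply no LCD $[n-1,k]$ code obtainable by puncturing $C$, and your intended contradiction via $d_{LCD}(n-1,k)\ge d$ cannot be reached along this route at all. Your hope that ``$\mathbf 1\in C$ contradicts some earlier constraint'' is unsupported: nothing in the hypotheses excludes it.

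In case~(b) you correctly isolate the obstruction: you need a coordinate $j$ whose column sum (in the orthonormal-basis generator matrix) is even \emph{and} which lies outside the support of every minimum-weight codeword. Your proposed double count does not force this; it is entirely possible that the even-sum columns are exactly covered by the union of minimum-weight supports. The fallback via Proposition~\ref{prop:n+1}(ii) and the inequalities (\ref{eq3})--(\ref{eq4}) does not close the gap either, since those only move you between lengths $n-1,n,n+1,n+2$ by at most one unit and give no new leverage beyond what Corollary~\ref{cor:odd2} already provides. In short, your outline rediscovers why the author left this as a conjecture rather than a theorem; a genuinely new idea would be needed to finish it.
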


In addition, we have classified optimal LCD codes of dimension $k\ge 11$ and/or $n\ge 30$ for a few values of $n$ and $k$. The results are listed in Table \ref{tableso2}.

Files with generator matrices and some additional information about the constructed codes, the classification and non-existence results, are presented in \cite{Stefka-Zenodo}.

\begin{table}
\caption{Classification of binary LCD codes $(d^\perp\ge 2)$\label{tableso}}
{\small
\begin{tabular}{r|cl|cl|cl|cl|cl|cl|cl}
\hline\noalign{\smallskip}
$n\backslash k$
   & 4 &    & 5 &     & 6  &      & 7 &      & 8 &      & 9 &      & 10&  \\
   \noalign{\smallskip}\hline\noalign{\smallskip}
17 &8  &2   &7  &10   & 6  & 3187 & 6 & 7    & 6 & 1    & 5 & 1    & 4 &4550     \\
18 &8  &20  &7  &495  & 7  & 11   & 6 &18604 & 6 &337   & 5 &11857 & 4 &         \\
19 &9  & 2  &8  &20   & 8  & 2    & 7 & 3    & 6 &299426& 6 & 3    & 5 &11554     \\
20 &10 &1   &9  &1    & 8  & 392  & 7 &73511 & 6 &      & 6 &      & 6 & 601       \\
\hline
21 &10  & 10 &9  & 72  & 8  & 48671  & 8  & 19   & 7 & 312215 & 6 &   & 6 &     \\
22 &10  & 76&10 & 1   & 9  &  105   & 8  &626306& 8 & 14330  & 7 &231430   & 6 &       \\
23 &11  & 2 &10 &104  & 10 & 14     & 9  & 11 & 8 &     & 7 &     & 7 &        \\
24 &12  &1  &11 & 1   & 10 & 4434   & 9 &1952980     & 8 &    & 8 &    & 8 &  358    \\
\hline
25 &12  &11 &11 & 122 & 10 &735670 & 10 & 49& 9 &   & 8 &   & 8 &       \\
26 &12  &106&12 & 1   & 11 & 221   & 10 &   &10 &216675   & 9 &   & 8 &     \\
27 &13  &9  &12 & 173 & 12 & 24    & 11 & 33&10 &   & 9 &   & 9 &    \\
28 &14  &2  &13 &  2  & 12 &21136  & 11 &   &10 &   & 10&   & 10& 5          \\
29 &14  &33 &13 & 477 & 12 &       & 12 &   & 11&   & 10&   & 10&     \\
30 &14  &310&14 &  2  & 13 &1024   & 12 &   & 12&   & 11&   &10 &         \\
\noalign{\smallskip}\hline
\end{tabular}}
\end{table}

\begin{table}
\caption{Classification of binary LCD codes $(d^\perp\ge 2)$\label{tableso2}}
{\small
\begin{tabular}{r|c|c|c|c|c|c|c|c}
\hline\noalign{\smallskip}
$[n,k,d]$
    &[17,11,3] &[18,11,4]  & [17,12,3]    &[18,12,4]  &[19,12,4]    &[20,14,4]   &[22,16,4]  &[24,18,4]    \\\hline
  &26020   &6003 &31    & 6 & 13139   & 4    &  3   & 1      \\
   \noalign{\smallskip}\hline\noalign{\smallskip}
$[n,k,d]$
    &[31,5,14] &[31,6,14]  & [31,7,13]    &[32,5,15]  &[35,6,16]    &[40,5,19]   &[30,23,4]  &    \\\hline
  &608   &59 &73    & 2 & 102   & 17    &  15   &       \\
\noalign{\smallskip}\hline
\end{tabular}}
\end{table}

\section{Conclusion}
\label{sec_conclusion}

In this paper, we extend the study of binary LCD codes and provide tables for the largest possible minimum weight $d_{LCD}(n,k)$ for the LCD $[n,k]$ codes. Looking at these tables, we made some observations that we would like to point out.

\begin{itemize}
\item There are LCD $[2k,k]$ codes that are also isodual (and formally self-dual). For example, the only $[8,4,3]$ LCD code, the only $[16,8,5]$ LCD code, and both $[26,13,7]$ LCD codes. It is easy to see, that any LCD formally self-dual code must be odd-like (otherwise the hull will contain the all-ones vector). A more systematic study of these codes would be interesting.
\item In many cases, the optimal LCD $[n,k]$ codes have large dual distance, moreover, their dual codes are also optimal LCD codes. For example, four of the optimal $[28,10,10]$ codes have dual distance 5 (and therefore their dual codes are optimal $[28,18,5]$ codes), and the last one has dual distance 4. In the same time, if $d_{LCD}(n+1,k)=d_{LCD}(n,k)$, the optimal $[n,k]$ codes with added zero coordinate to all codewords, produce optimal $[n+1,k]$ codes with dual distance 1. The problem is: Under what conditions
are an LCD code and its dual optimal at the same time.
\item The computation results show that $$d_{LCD}(n+1,k)=d_{LCD}(n,k) \ \mbox{or} \ d_{LCD}(n,k)+1.$$ We conjecture that this is true for all lengths $n$ and dimensions $k$.
\end{itemize}

\section*{Acknowledgements}
I would like to thank Petar Boyvalenkov for focusing my attention to LCD codes and introducing me to the main literature related to this type of codes. I am also grateful to Iliya Bouyukliev for the included restriction about LCD codes in his program \textsc{Generation}.

%
%



\end{document}